\tikzset{
   n/.style= {circle,fill,inner sep=1.5pt,node distance=2cm}
  ,acc/.style={circle,draw,inner sep=3pt,node distance=2cm}
  ,phantom/.style={circle},
  ,arr/.style={->, >=stealth, semithick, shorten <= 3pt, shorten >= 3pt}
}
\newcommand{\seq}{\mathsf{seq}}
\newcommand{\lft}{\mathsf{head}}
\newcommand{\takeout}[1]{\empty}
\title{Faster and Smaller Solutions of Obliging Games}
\author{Daniel Hausmann}{University of Gothenburg, Sweden}{d.hausmann@liverpool.co.uk}{https://orcid.org/0000-0002-0935-8602}{Supported by the EPSRC through grant EP/Z003121/1.}
\author{Nir Piterman}{University of Gothenburg,
	Sweden}{nir.piterman@chalmers.se}{https://orcid.org/0000-0002-8242-5357}{}
\authorrunning{D.~Hausmann and N.~Piterman} 
\keywords{Two-player games, reactive synthesis, Emerson-Lei games, parity games} 
\begin{document}

\maketitle
\begin{abstract}
Obliging games have been introduced in the context of the game perspective on reactive synthesis in order to enforce a degree of cooperation between the to-be-synthesized system and the environment. Previous approaches to the analysis of obliging games have been small-step in the sense that they have been based on a reduction to standard (non-obliging) games in which single moves correspond to single moves in the original (obliging) game. Here, we propose a novel, large-step view on obliging games, reducing them to standard games in which single moves encode long-term behaviors in the original game. This not only allows us to give a meaningful definition of the environment winning in obliging games, but also leads to significantly improved bounds on both strategy sizes and the solution runtime for obliging games.
\end{abstract}

\section{Intro}

Infinite duration games~\cite{GraedelThomasWilke02} and their analysis are central to various logical problems
in computer science; problems with existing game reductions subsume model checking~\cite{BradfieldW18,HausmannS19} and satisfiability checking~\cite{FriedmannLange13a,FriedmannLatteLange13}
for temporal logics (such as CTL or the $\mu$-calculus), or reactive synthesis for LTL specifications~\cite{EsparzaKRS22}. Game arenas that are used in such reductions typically incorporate two antagonistic players (called player $\exists$ and player $\forall$ in the current work) that have dual objectives. Then the reductions construct game arenas and objectives in such a way that
the instance of the original problem has a positive answer if and only if player $\exists$ has a strategy that ensures that the player's objective is satisfied (that is, player $\exists$ \emph{wins} the game). \emph{Solving} games then amounts to determining their winner.
Games with Borel objectives are known to be \emph{determined}~\cite{Martin75}, that is, for each node in them, exactly one of the players $i$ has a winning strategy of type $V^* V_i\to V$, where $V$ is the set of game nodes, and $V_i$ the set of nodes controlled by player $i$.

\emph{Reactive synthesis}~\cite{PnueliR89} considers a setting in which a system works within an antagonistic environment, and the system-enviroment interaction is modelled by means of input variables from a set $I$ (controlled by the environment), and output variables from a set $O$ (controlled by the system). The synthesis problem then takes a logical specification $\varphi$ over the variables $I\cup O$ as input and asks for the automated construction of a system in which all interactions between the system and the environment satisfy the specification;
if such a system exists, then $\varphi$ is said to be \emph{realizable}. The reactive
synthesis problem therefore goes beyond checking realizability by also asking for a witnessing system
in the case that the input specification is realizable.
While the problem has been shown to be \textsc{2ExpTime}-complete for specifications that
are formulated in LTL, a landscape of algorithms and implementations has been developed that shows good performance
in (some) practical use cases (e.g.~\cite{LuttenbergerMS20,DijkAT24}). The feasability of these algorithms is largely owed to an  underlying reduction to infinite-duration two-player games, most commonly with parity objectives. Answering the realizability problem then corresponds to deciding the winner of the reduced game, while the construction of a witnessing system for a realizable specification corresponds to the extraction of a winning strategy from that game.
This motivates the
interest not only in game solving algorithms, but also in the analysis and extraction of winning strategies. In particular, the amount of memory required by winning strategies in
the types of games at hand determines the minimum size of witnessing systems.

Building on the described game perspective on reactive
synthesis, \emph{obliging games} have been proposed to deal
with the situation that a system might trivially realize a
specification by disallowing most or all interactions with the
environment (cf.~\cite{BloemEJK14,ChatterjeeHL10}). Obliging games address
this problem by requiring the system player to have a strategy
that not only always guarantuees that the system's strong objective (say
$\alpha_S$) is achieved, but to also always keep an interaction possible in
which intuitively both players cooperate to achieve a second, weak,
objective (say $\alpha_W$). Such a strategy then is called
\emph{graciously winning} for the system player.
For example, the generalized reactivity (GR[1]) setting (cf.~\cite{BloemJPPS12}) incorporates $k$ different
\emph{requests} and
$k$ corresponding \emph{grants} ($R_i,G_i$ for $1\leq i\leq k$).
Then the objective $\alpha_S$ states that `if all $R_i$ (requests) hold infinitely often,
then all $G_i$ (grants) hold infinitely often'. In this
setting, player
$\exists$ may satisfy the objective $\alpha_S$ trivially by simply ensuring that, for each interaction, there is some $R_i$ that is eventually avoided forever. The obliging game
setting allows to address this situation by taking $\alpha_W$ to
require that all $R_i$ hold infinitely often. Then a gracious strategy for
player $\exists$ has to ensure $\alpha_S$ (possibly by avoiding, on most interactions, some $R_i$), but it also has to enable player $\forall$ to additionally
realize $\alpha_W$, thereby always enabling at least one ``interesting'' interaction on
which all $R_i$ and also all $G_i$ hold infinitely often.

Previous approaches to the analysis of obliging games~\cite{ChatterjeeHL10,MajumdarSchmuck23} have been largely based on a reduction to equivalent non-obliging games in which the players still take single steps on the original
game graph, but in addition to that, game nodes are annotated with
auxiliary memory that is used to deal with the more involved obliging
game's objectives.
Obliging GR[1] games, as in the example above, have been considered
under the names of ``cooperative'' \cite{BloemEK15} and
``environmentally-friendly'' \cite{MajumdarPS19}.
Independent bespoke symbolic algorithms of slightly better
complexity than the general solution have been suggested.

In this context, the contributions of the current work are threefold:
\begin{itemize}
\item We propose a novel perspective on obliging games that is based on considering multi-step strategies of players in the original game, rather than emulating single-step moves.
In more detail, we provide an alternative reduction that takes obliging games to equivalent non-obliging
games in which the system player commits to certain long-term behaviors, encoded by so-called witnesses, which are just plays of the original game (we therefore call the resulting games \emph{witness games}). The environment player in turn
can either check whether a given witness is valid, that is, satisfies both $\alpha_S$ and $\alpha_W$, or accept the witness and exit it
at any game node that occurs in the witness, thereby intuitively challenging the system player to still win when a play only traverses the witness up to the exit node and then continues on outside of the proposed witness; in the latter case, the system player has to provide a new witness for the challenged game node, and so on. We use the reduction to witness games to show determinacy of obliging games with Borel objectives (however, with respect to strategies of type $V^*\to V^\omega$ for the
system player, and strategies of type $V^\omega\to V\cup V^\omega$ for
the environment player).
\item We show that witnesses for obliging games with  $\omega$-regular objectives
$\alpha_S$ and $\alpha_W$ have finite representations, as they correspond to (accepting) runs of
$\omega$-automata with acceptance condition $\alpha_S\land\alpha_W$; we call such representations \emph{certificates}.
Using certificates in place of infinite witnesses, we modify witness games to obtain
an alternative reduction that takes $\omega$-regular obliging games
to \emph{finite} $\omega$-regular non-obliging games, called \emph{certificate games}.
Technically, we present the reduction for obliging games with Emerson-Lei objectives.
 During the correctness proof for this reduction, we show that
 the memory requirements of graciously winning strategies for
 Emerson-Lei obliging games depend only linearly on the size of
 $\alpha_W$, improving significantly upon existing upper
 bounds~\cite{ChatterjeeHL10,MajumdarSchmuck23} that are, in general, exponential in the
 size of $\alpha_W$.
\item The certificate games that we propose contain an explicit game node for any possible certificate within an obliging game. Hence they are prohibitively large and it is not viable to solve them naively. However, we show how a technique of fixpoint acceleration can be used to speed up the solving process of certificate games, replacing the exploration of all certificate nodes with emptiness checking for suitable $\omega$-automata; this technique solves certificate games by computing nested fixpoints of a function that checks for the existence of suitable certificates. Thereby we are able to improve previous upper runtime bounds for the solution of obliging games; in many cases, our algorithm outperforms existing algorithms by an exponential factor.
\end{itemize}
Summing up, we propose a novel approach to the analysis of obliging games that provides more insight in their determinacy, and for
obliging games with Emerson-Lei objectives, we significantly improve existing upper bounds both on strategy sizes and solution time.\medskip

\emph{Structure.} We introduce obliging games and related notions
in Section~\ref{sec:prelim}. In Section~\ref{sec:determinacy},
we reduce obliging games to witness games and use the
reduction to show that
obliging games are determined (for strategies with additional information). Subsequently, we restrict our attention to
$\omega$-regular obliging games with Emerson-Lei objectives. In Section~\ref{sec:certgames} we reduce witness games with such objectives to
certificate games and use the latter to obtain improved upper bounds
on strategy sizes in obliging games.
In Section~\ref{sec:solving} we
show how certificate games can be solved efficiently, in consequence
improving previous upper bounds on the runtime complexity of solving
obliging games. Full proofs and additional details can be found in the appendix.

\section{Preliminaries}\label{sec:prelim}

We start by recalling obliging games
and extend the setup from previous work
to use general Borel objectives in place
of Muller objectives; we also introduce
the special case of obliging games with Emerson-Lei objectives,
and recall the definition of Emerson-Lei automata.\medskip

\noindent
\emph{Arenas, plays, strategies.} An \emph{arena} is a graph $A=(V,V_\exists,E)$, consisting of a set $V$ of \emph{nodes} and a set $E\subseteq V\times V$ of \emph{moves}; furthermore, we assume that
the set of nodes is partitioned
into the sets $V_\exists$ and $V_\forall:=V\setminus V_\exists$
of nodes \emph{owned} by player $\exists$ and
by player $\forall$, respectively.
We write $E(v)=\{w \in V\mid (v,w)\in E\}$ for the set of nodes reachable from
node $v\in V$ by a single move. We assume without loss of generality that
every node has at least one outgoing edge, that is, that $E(v)\neq\emptyset$
for all $v\in V$.
A \emph{play} $\pi=v_0 v_1\ldots$
on $A$ is a (finite or infinite) sequence of nodes such that
$v_{i+1}\in E(v_i)$ for all $i\geq 0$. The length $|\pi|=n+1$ of a finite play
$\pi=v_0 v_1\ldots v_n$ is the number of vertices it contains;
throughout, we denote the set $\{0,\ldots,n\}$ for $n\in\mathbb{N}$ by $[n]$.
By abuse of notation, we denote by $A^\omega$ the set of infinite plays
on $A$ and by $A^*$ and $A^+$ the set of finite (nonempty) plays on $A$.
A \emph{strategy} for player $i\in\{\exists,\forall\}$
is a function $\sigma:A^*\cdot V_i\to V$ that assigns a node
$\sigma(\pi v)\in V$ to every finite play $\pi v$ that ends in a node
$v\in V_i$.
A strategy $\sigma$ for player $i$ is said to be \emph{positional} if the moves that it prescribes do not depend on previously visited game nodes. Formally this is the case if we have
$\sigma(\pi v)=\sigma(\pi' v)$ for all $v\in V_i$ and all $\pi,\pi'\in A^*$.
A play $\pi=v_0 v_1\ldots$ is \emph{compatible} with a strategy $\sigma$ for
player $i\in\{\exists,\forall\}$ if
for all $j\geq 0$ such that $v_j\in V_i$, we have $v_{j+1}=\sigma(v_0 v_1\ldots v_j)$.\medskip

\noindent
\emph{Objectives and games.}
In this work we consider two types of objectives: \emph{Borel
objectives} and \emph{Emerson-Lei objectives}.
Borel objectives are explicit sets of infinite sequences of vertices.
A set is \emph{Borel definable} if it is in the $\sigma$-algebra over the open
subsets of infinite sequences of vertices $V^\omega$.
That is, sets that can be obtained by countable unions, countable
intersections, and complementations from the open sets (sets of the
form $w V^\omega$ for $w\in V^*$).
A play $\pi$ on $A$ \emph{satisfies} a Borel objective $B$ if $\pi\in B$.

Emerson-Lei objectives are
specified relative to a coloring function $\gamma_C:E\to 2^C$ (for some set
$C$ of colors) that assigns a set
$\gamma_C(v,w)\subseteq C$ of colors to every move $(v,w)\in E$; we note that our setup is more succinct than the one from~\cite{ChatterjeeHL10} where each edge has (at most) one color. A play $\pi=v_0v_1\ldots$ then induces a
sequence
$\gamma_C(\pi)=\gamma_C(v_0,v_1)\gamma_C(v_1,v_2)\ldots$ of sets of colors.
Emerson-Lei objectives are given as positive Boolean formulas
$\varphi_C\in\mathbb{B}_+(\{\mathsf{Inf}~c,\mathsf{Fin}~c\mid c\in C\})$
over atoms of the shape $\mathsf{Inf}~c$ and
$\mathsf{Fin}~c$.
Such formulas are interpreted over infinite sequences
$\gamma_0\gamma_1\ldots$ of sets of colors. We
put $\gamma_0\gamma_1\ldots\models \mathsf{Inf}~c$ if and only if there are infinitely many positions $i$ such that $c\in\gamma_i$,
and $\gamma_0\gamma_1\ldots\models \mathsf{Fin}~c$ if there are only finitely many such positions; satisfaction of Boolean operators is defined in the usual way.
Then an infinite play $\pi$ on $A$ \emph{satisfies} the formula $\varphi_C$ if and only if
$\gamma_C(\pi)\models \varphi_C$ and we define the Emerson-Lei objective induced by $\gamma_C$
and $\varphi_C$ by putting
\begin{align*}
\alpha_{\gamma_C,\varphi_C}=\{\pi\in A^\omega\mid\gamma_C(\pi)\models\varphi_C\}.
\end{align*}
\emph{Parity objectives} are a special case of Emerson-Lei objectives
with set $C=\{p_0,\ldots, p_k\}$ of colors, where each edge has exactly one color (also called \emph{priority}), and where $\varphi=\bigvee_{i \text{ even}}
\mathsf{Inf}~p_i\wedge \bigwedge_{i<j \leq k} \mathsf{Fin}~p_j$, stating
that the maximal priority that is visited infinitely often has an even index.
We can denote such objectives by just a single function $\Omega:E\to [k]$.
Further standardly used conditions include \emph{B\"uchi}, \emph{generalized B\"uchi}, \emph{generalized reactivity (GR[1])}, \emph{Rabin} and \emph{Streett} objectives, all of which are special cases of Emerson-Lei objectives (see e.g.~\cite{HausmannLP24}); the memory required for winning in such games has been investigated in~\cite{DziembowskiJW97}.

We note that neither Borel nor Emerson-Lei objectives enable a simple
distinction
between finite plays ending in existential and universal nodes; hence we will
avoid deadlocks in our game reductions, ensuring that all games in this
work allow only
infinite plays.

A \emph{standard game} is a tuple $(A,\alpha)$, where
$A=(V,V_\exists,E)$ is an arena and $\alpha$ is an objective.
A strategy $\sigma$ is \emph{winning} for player $\exists$ at some node $v\in V$
if all plays that start at $v$ and are compatible with $\sigma$ satisfy the
objective $\alpha$.
A strategy $\tau$ for player $\forall$ is defined winning dually.

An \emph{obliging game} is a tuple $(A,\alpha_S,\alpha_W)$, consisting
of an arena $A=(V,V_\exists,E)$ together with two objectives
$\alpha_S$ and $\alpha_W$,
called the \emph{strong} and \emph{weak} objective, respectively; we also
refer to such games as $\alpha_S$ / $\alpha_W$ obliging games.
A strategy $\sigma$ for player $\exists$ is \emph{graciously winning}
for $v\in V$ if all plays that start at $v$ and
are compatible with $\sigma$ satisfy the strong objective $\varphi_S$ and furthermore
every finite prefix $\pi\in A^*$ of a play that is compatible with $\sigma$ can
be extended to an infinite play $\pi\tau$ that
is compatible with $\sigma$ and satisfies the weak objective $\alpha_W$.
We sometimes refer to the infinite plays $\pi\tau$ witnessing the
satisfaction of the weak objective as \emph{obliging}
plays. It follows immediately that player $\exists$ can only
win graciously at nodes at which at least one obliging play (satisfying $\alpha_S\wedge\alpha_W$)
starts, so we assume without loss of generality that this is the case for all nodes in $V$.\medskip

\begin{example}\label{ex:og} 
We consider the Emerson-Lei obliging game depicted
  below with the set $\{a,b,c,d\}$ of colors, a Streett condition
  (with two pairs $(a,b)$ and $(c,d)$) as strong objective $\alpha_S$,
  and generalized B\"uchi condition enforcing visiting both $a$ and $c$ as weak objective
  $\alpha_W$.
  In all examples in this work, nodes belonging to player
  $\exists$ are depicted with rounded corners, while $\forall$-nodes
  are depicted by rectangles; edges may have several colors in
  Emerson-Lei games, but in this example, each edge has at most one
  color. We use the dashed edge to illustrate both winning and losing
  in an obliging game. 
	\begin{center}
\begin{minipage}{5cm}
\tikzset{every state/.style={minimum size=15pt}, every node/.style={minimum size=15pt}}
    \begin{tikzpicture}[
		auto,
    node distance=1.7cm,
    semithick
    ]
     \node[draw, rounded corners] (1)  {$v_1$};
     \node[state, rectangle] [right of=1] (2) {$v_2$};
     \node[state, rectangle] [right of=2] (3) {$v_3$};
     \node[state, rectangle] [below of=1] (4) {$v_5$};
     \node[draw, rounded corners] [below of=2] (5)  {$v_4$};

     \path[->] (1) edge node {$a$} (2);
     \path[->] (2) edge node {} (5);

     \path[->] (2) edge node {$c$} (3);
     \path[->] (3) edge node {} (5);

     \path[->] (4) edge node {$b$} (1);
     \path[->] (5) edge node[right, pos=0.65] {$d$} (1);

     \path[->, dashed] (4) edge[bend right] node {} (5);
     \path[->] (5) edge[bend right] node {} (4);

  \end{tikzpicture}
\end{minipage}
\begin{minipage}{7cm}
$\alpha_S=(\mathsf{Inf}~a\to\mathsf{Inf}~b)\wedge
(\mathsf{Inf}~c\to\mathsf{Inf}~d)$\\
$\alpha_W=\mathsf{Inf}~a \wedge\mathsf{Inf}~c$
\end{minipage}
\end{center}
Consider the strategy $\sigma$ with which player $\exists$ alternatingly moves to $v_5$ and to $v_1$ when node $v_4$ is reached, depending on where they moved from $v_4$ the last time it has been visited. Without the dashed edge, $\sigma$ is graciously winning: every play compatible with $\sigma$ visits the colors
$a,b$ and $d$ infinitely often and hence satisfies $\alpha_S$; also, $\sigma$ allows player $\forall$ to visit color $c$ arbitrarily often by moving from $v_2$ to $v_3$, so every finite prefix of a $\sigma$-play can be continued to an obliging $\sigma$-play that infinitely often visits $v_3$ and
therefore satisfies $\alpha_S\wedge\alpha_W$. 
If the dashed edge is added to the arena, then $\sigma$ is no longer graciously winning.
Indeed, when playing against $\sigma$, player $\forall$ can prevent $b$ from ever being visited by always moving from $v_5$ to $v_4$.
However, the modified strategy $\sigma'$ that
moves from $v_4$ to $v_1$ only if the last visited node
is \emph{not} $v_5$ and also $b$ has been visited more recently than $d$ (and otherwise moves back to $v_5$) is graciously winning.
Every $\sigma'$-play that ends in $(v_4 v_5)^\omega$ satisfies
$\alpha_S$ but also can be made into an obliging play 
by having $\forall$ move to $v_1$ whenever $v_5$ is reached.
\end{example}

\noindent
\emph{Emerson-Lei automata.} Given an Emerson-Lei objective
$\alpha_{\gamma_C,\varphi_C}$, an \emph{Emerson-Lei automaton}
is a tuple $A=(\Sigma,Q,\delta,q_0,\alpha_{\gamma_C,\varphi_C})$,
where $\Sigma$ is the alphabet, $Q$ is a set of \emph{states},
$\delta\subseteq Q\times\Sigma\times Q$ is the transition relation,
and $q_0\in Q$ is the initial state;
in this context, we assume that $\gamma_C:\delta\to 2^C$ assigns
sets of colors to transitions in $A$.
A \emph{run} of $A$ on some infinite word $w=a_0 a_1\ldots\in \Sigma^\omega$ is a sequence $\pi=q_0 q_1\ldots \in Q^\omega$
such that $(q_i,a_i,q_{i+1})\in \delta$ for all $i\geq 0$.
A run $\pi$ is \emph{accepting} if and only if $\gamma_C(\pi)\models \varphi_C$, and $A$ \emph{recognizes} the language
\begin{align*}
L(A)=\{w\in \Sigma^\omega\mid \text{there is an accepting run of $A$ on $w$}\}.
\end{align*}
An Emerson-Lei automaton $A$ is \emph{non-empty} if and only if $L(A)\neq\emptyset$.
All automata that we consider in this work will have a single-letter alphabet $\Sigma=\{*\}$ so that they can read just the single infinite word $*^\omega$.

\section{Determinacy of Obliging Games}\label{sec:determinacy}

Chatterjee et al.~\cite{ChatterjeeHL10}, when formalizing obliging
games, stated that the standard shape of strategies (that is, $ A^* \cdot V_i\to V$) does not allow
player $\forall$ to counteract player $\exists$'s strategy with a
single strategy.
We show that, for a more general form of strategy (of type $A^\omega \to V\cup A^\omega$), this is possible.
Furthermore, with these more general strategies, the games are
determined.
That is, players are not disadvantaged by revealing their strategy
first and one of the players always has a winning strategy.
This insight allows offering alternative (more efficient) solutions
to obliging games in the next sections.

Fix an obliging game $G=(A,\alpha_S,\alpha_W)$.
Let $A=(V,V_\exists,E)$ and let $\alpha_S$ and $\alpha_W$ be Borel sets.
A \emph{witness} for vertex $v\in V$ is an infinite play $\pi=v_0v_1\cdots \in
A^\omega$ such that $v_0=v$; we write $\mathsf{witness}(v)$ to denote
the set of witnesses for $v\in V$.
The \emph{witness game} $W(G)=(A',\alpha')$ captures the obliging game
by allowing player $\exists$ to choose an explicit witness for a
given vertex. Player $\forall$ can then either check whether the
witness satisfies both $\alpha_S$ and $\alpha_W$, or stop at some point verifying the
witness and ask for a new witness.
If player $\forall$ changes witnesses infinitely often, they still check that
the resulting play satisfies $\alpha_S$.

Formally, we have
$A'=(V',V'_\exists,E')$, where
$V'=V \cup  A^\omega$, $V'_\exists = V$ and $E'$ is as follows.
	$$
	\begin{array}{l c l l}
	E' & = &
	\{(v,\pi)\mid v\in V, \pi\in\mathsf{witness}(v)\} \qquad \cup \qquad \{(v\pi,\pi)\mid v\pi\in A^\omega\} & \cup
	\\
   & & 	\{(v\pi,v'') ~|~ v\pi\in A^\omega, v \in V_\forall, 
	v''\in E(v)
\}.
\end{array}$$

	Given $v\pi \in V'_\forall$ put $\lft(v\pi)=v$.
	We extend $\lft$ to sequences over $V'$ in the natural way.
	Consider a play $\pi'\in (A')^\omega$.
	Let $\pi'\Downarrow
	_{V'_\forall}$ denote the projection of $\pi'$ to the elements of
	$V'_\forall$,
	that is, $\pi'\Downarrow
	_{V'_\forall}$ is obtained from $\pi'$ by removing all elements in $V'_{\exists}=V$.
	Then put $\seq_A(\pi')=\lft(\pi'\Downarrow_{V'_\forall})$.
	Clearly, $\seq_A(\pi') \in A^\omega$,
	that is, $\seq_A(\pi')$ extracts from $\pi'$ the infinite
	play in $A$ that is followed in $\pi'$.
	The winning condition $\alpha'$ consists of plays $\pi'$ that remain
	eventually in $V'_{\forall}$ forever such that $\seq(\pi')$
	satisfies both $\alpha_S$ and $\alpha_W$, or plays $\pi'$ that visit
	$V'_\exists$ infinitely often such that $\seq(\pi')$
	satisfies $\alpha_S$.
	Formally, we have the following:
	$$
	\begin{array}{l c l l}
	\alpha' & = & \{ \pi'\in V'^*\cdot (V'_{\forall})^\omega ~|~
	\seq_A(\pi') \in \alpha_S\cap \alpha_W\} & \cup \\
	& & \{ \pi'\in V'^*\cdot (V'_\exists \cdot (V'_{\forall})^*)^\omega
	~|~
	\seq_A(\pi') \in \alpha_S \}
	\end{array}
 	$$

\begin{example}
For brevity, we refrain from showing the complete witness game associated to the obliging game from Example~\ref{ex:og} and instead consider just
two witnesses for the node $v_1$, namely 
 $(v_1 v_2 v_4 v_5)^\omega$ and
 $(v_1 v_2 v_3 v_4 v_1 v_2 v_4 v_5)^\omega$.
The former satisfies the Streett objective $\alpha_S$ as it visits the colors $a$ and $b$.
However, it does not satisfy the generalized B\"uchi objective $\alpha_W$ as color $c$ is not visited infinitely often.
The latter witness, contrarily, visits all colors infinitely often and hence satisfies $\alpha_S\wedge\alpha_W$.
In the witness game, player $\exists$ can move from $v_1$ to these two
witnesses (and to many more). Player $\forall$ in turn can win the
first witness by exploring it indefinitely, thereby showing that it
does not satisfy $\alpha_W$; doing the same for the second witness,
player $\forall$ loses. In both certificates, we have exit moves
from every position such that the node at the position is owned by
player $\forall$, that is, moves from positions with node $v_2$ to
$E(v_2)$, and similarly for $v_3$ and $v_5$.   
\end{example}

\begin{lemma}
Player $\exists$ is graciously winning in $G$ at $v$ iff
player $\exists$ is winning in $W(G)$ at $v$.
\label{lemma:gracious games}
\end{lemma}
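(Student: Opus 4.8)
The plan is to prove both directions by translating strategies between the obliging game $G$ and the witness game $W(G)$, carefully tracking how gracious winning in $G$ corresponds to ordinary winning in $W(G)$.

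\medskip

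\textbf{Direction ($\Rightarrow$): gracious winning in $G$ implies winning in $W(G)$.} Suppose $\sigma$ is a graciously winning strategy for player $\exists$ in $G$ at $v$. I would build a strategy $\sigma'$ in $W(G)$ as follows. Whenever play is at a node $u\in V=V'_\exists$, player $\exists$ must propose a witness $\pi_u\in\mathsf{witness}(u)$; I take $\pi_u$ to be the obliging play through $u$ guaranteed by graciousness, namely a $\sigma$-compatible continuation of the play-so-far that satisfies $\alpha_S\wedge\alpha_W$. (Here I must use that the finite prefix of the $G$-play reconstructed so far, obtained via $\seq_A$, is $\sigma$-compatible, so graciousness applies to it.) Player $\forall$ then either stays inside the witness forever, or exits at some $\forall$-node, returning control to a fresh $V'_\exists$-node where $\sigma'$ again proposes a fresh obliging witness. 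I then argue that every $\sigma'$-compatible play $\pi'$ lies in $\alpha'$: if $\forall$ eventually stays in one witness forever, then $\seq_A(\pi')$ equals the tail of that obliging witness, which satisfies $\alpha_S\wedge\alpha_W$; if $\forall$ exits witnesses infinitely often, then $\seq_A(\pi')$ is an actual $\sigma$-compatible play of $G$ (each segment follows a $\sigma$-compatible witness up to the exit, and the exit edges are genuine $E$-moves), hence satisfies the strong objective $\alpha_S$.

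\medskip

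\textbf{Direction ($\Leftarrow$): winning in $W(G)$ implies gracious winning in $G$.} Suppose $\sigma'$ wins in $W(G)$ at $v$. I would extract a gracious $\sigma$ of type $A^*\cdot V_\exists\to V$ (or of the generalized type the paper uses) by simulating play in $W(G)$: given a finite $G$-history ending at a $\exists$-node, I feed the corresponding $W(G)$-position to $\sigma'$, read off the proposed witness, and have $\forall$ in $W(G)$ exit that witness at exactly the point matching player $\forall$'s actual move in $G$, thereby translating each genuine $G$-move into a witness-plus-exit in $W(G)$. For the strong objective, any $\sigma$-play of $G$ corresponds to a $\sigma'$-play that exits witnesses infinitely often (whenever a $\forall$-choice in $G$ is made), so its $\seq_A$ projection satisfies $\alpha_S$. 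For the obliging condition, given any finite $\sigma$-play prefix, I let $\forall$ in $W(G)$ accept the currently proposed witness and explore it forever; since $\sigma'$ is winning, that play is in $\alpha'$, forcing $\seq_A$ to satisfy $\alpha_S\wedge\alpha_W$, and this tail gives the required obliging extension of the prefix.

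\medskip

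The main obstacle I anticipate is the bookkeeping that makes $\seq_A$ behave correctly, specifically ensuring that the $G$-play reconstructed by $\seq_A$ from a $W(G)$-play is always genuinely $\sigma$-compatible (resp.\ that exits land on real $E$-edges), so that graciousness can be invoked on the right prefixes. A subtle point is the interplay between the two disjuncts of $\alpha'$: I must verify that the ``exits infinitely often'' case faithfully recovers a legitimate infinite $G$-play, while the ``eventually stays'' case recovers an obliging witness tail; mismatches here (e.g.\ if $\forall$ could exit at $\exists$-nodes, which the definition of $E'$ forbids) are exactly what the restriction $v\in V_\forall$ in the exit-edge clause rules out, and I would lean on that restriction to keep the simulation sound in both directions.
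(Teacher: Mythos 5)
Your proposal is correct and follows essentially the same route as the paper's proof: in both directions you translate strategies via the proposed witnesses, use the ``eventually stays in a witness'' disjunct of $\alpha'$ to certify the weak objective and the ``exits infinitely often'' disjunct to certify the strong one, and you correctly identify that $\sigma$-compatibility of the $\seq_A$-reconstructed prefixes (together with the restriction that exits occur only at $\forall$-nodes) is the key invariant to maintain. The paper merely makes your simulation explicit by building a strategy tree $T$ of handled histories and reading $\sigma$'s moves off partitions of the proposed witnesses, but the substance is identical.
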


\begin{corollary}
Obliging games are determined.
\label{corr:determinacy of obliging games}
\end{corollary}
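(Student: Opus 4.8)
The plan is to derive determinacy of obliging games directly from Lemma~\ref{lemma:gracious games} together with the classical Borel determinacy theorem~\cite{Martin75} applied to the witness game $W(G)$. First I would argue that $W(G)$ is a \emph{standard} game in the sense of the preliminaries, so that the general determinacy machinery applies to it: its arena $A'=(V',V'_\exists,E')$ has no deadlocks (every $V'_\exists$-node $v$ has a successor because $\mathsf{witness}(v)\neq\emptyset$ by our standing assumption that an obliging play starts at every node, and every $V'_\forall$-node $v\pi$ has the successor $\pi$), and hence $W(G)$ admits only infinite plays. Consequently, by Borel determinacy, for every node $v\in V'$ exactly one of the two players has a winning strategy in $W(G)$ at $v$, \emph{provided} the objective $\alpha'$ is Borel.

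The key technical step is therefore to verify that $\alpha'$ is a Borel subset of $(V')^\omega$, given that $\alpha_S$ and $\alpha_W$ are Borel subsets of $V^\omega$. I would decompose $\alpha'$ according to its two disjuncts. The map $\seq_A:(V')^\omega\to V^\omega$ is a continuous function (each coordinate of $\seq_A(\pi')$ depends only on a finite prefix of $\pi'$), so the preimages $\seq_A^{-1}(\alpha_S)$ and $\seq_A^{-1}(\alpha_W)$ are Borel. The sets $(V')^*\cdot (V'_\forall)^\omega$ (plays eventually remaining in $V'_\forall$) and $(V')^*\cdot(V'_\exists\cdot(V'_\forall)^*)^\omega$ (plays visiting $V'_\exists$ infinitely often) are likewise Borel, being respectively a countable union of closed sets and a $G_\delta$-type condition describable by a standard $\omega$-regular pattern over the partition $\{V'_\exists,V'_\forall\}$. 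Since $\alpha'$ is obtained from these Borel sets by finite intersections and a union, $\alpha'$ is Borel.

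Having established that $W(G)$ is a Borel standard game, I would invoke determinacy to conclude that at every $v\in V$ either player $\exists$ wins $W(G)$ or player $\forall$ wins $W(G)$. The final step is to transfer this dichotomy back to $G$. By Lemma~\ref{lemma:gracious games}, player $\exists$ is graciously winning in $G$ at $v$ if and only if player $\exists$ wins $W(G)$ at $v$; and when player $\exists$ does not win $W(G)$, determinacy yields a winning $\forall$-strategy in $W(G)$, which by the intended interpretation is precisely a winning strategy for the environment in $G$. Thus exactly one of the players has a winning strategy at each node of $G$, which is the asserted determinacy (with respect to the generalized strategy types announced in the introduction, of type $V^*\to V^\omega$ for player $\exists$ and $V^\omega\to V\cup V^\omega$ for player $\forall$).

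I expect the main obstacle to be the Borel-measurability check for $\alpha'$, in particular the bookkeeping needed to certify continuity of $\seq_A$ and to express the ``eventually in $V'_\forall$'' versus ``infinitely often in $V'_\exists$'' conditions as Borel sets in the product topology on $(V')^\omega$; everything else is an essentially immediate consequence of Lemma~\ref{lemma:gracious games} and Martin's theorem.
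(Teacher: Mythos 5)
Your proposal is correct and follows essentially the same route as the paper: combine Lemma~\ref{lemma:gracious games} with Martin's Borel determinacy applied to the witness game $W(G)$. The only difference is one of emphasis --- you spell out why $\alpha'$ is Borel (continuity of $\seq_A$ plus Borel-ness of the ``eventually in $V'_\forall$'' and ``infinitely often in $V'_\exists$'' conditions), which the paper leaves implicit, while the paper instead flags the point that Martin's theorem still applies despite the witness game having a continuum-sized vertex set and branching degree.
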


\begin{proof}
	This follows immediately from Lemma~\ref{lemma:gracious games} and
	the determinacy of games with Borel winning conditions
	\cite{Martin75}.
	Notice that Martin's determinacy result holds also for games with
	continuous vertex-spaces and continuous branching degrees, as in
	the witness game.
\end{proof}

\section{Reducing $\omega$-Regular Obliging Games to Finite Games}
\label{sec:certgames}

From this point on, we restrict our attention to obliging games in which both objectives are Emerson-Lei objectives; we note that
every $\omega$-regular objective can be transformed to an Emerson-Lei objective.
It turns out that due to the $\omega$-regularity of Emerson-Lei objectives, obliging plays in such games have finite witnesses
that take on the form of lassos that are built over the game arena at
hand. Formally, we show that for every such game we can create a game that is smaller than the witness game by restricting
attention to witnesses of this specific form that satisfy the
acceptance conditions. We call such witnesses
\emph{certificates}, which we define next.

\subsection{Certificates from Witnesses}
We fix an Emerson-Lei obliging game $G=(A,\alpha_S,\alpha_W)$
with arena $A=(V,V_\exists,E)$, objectives $\alpha_S=(\gamma_S,\varphi_S)$ and
$\alpha_W=(\gamma_W,\varphi_W)$, and put $n:=|V|$, $d:=|S|$ and $k:=|W|$.

\begin{definition}[Certificate]
Given a node $v\in V$, a \emph{certificate} for $v$ (in $G$) is a witness for $v$ that is
of the form $c=wu^\omega$; if $c$ satisfies $\varphi_S\land\varphi_W$, then we
say that $c$ is a \emph{valid certificate}.
\end{definition}

We equally represent $c=wu^\omega$ by the pair $c=(w,u)$.
Let $w=w_0 w_1\ldots
w_m$ and $u=u_0 u_1\ldots u_r$.
We refer to $w$ as the \emph{stem} and to $u$ as the \emph{loop} of
$c$, and to $m$ and $r$
as the length of the stem and the loop, respectively.
When the partition of $c$ is not important we sometimes just write
$c=v_0\ldots v_{m+r+1}$.
Clearly, as satisfaction of Emerson-Lei objectives depends only on the
infinite suffixes of a play, it follows that in a valid certificate, $u^\omega$ also satisfies
$\varphi_S$ and $\varphi_W$.

Given a coloring function $\gamma_C$ over some set $C$ of colors,
the $C$-\emph{fingerprint} of a finite play $\pi=v_0 v_1\ldots v_j\in A^*$ is
the set $\textstyle\bigcup_{0\leq i< j}\gamma_C(v_i,v_{i+1})$ of colors
visited by $\pi$,
according to $\gamma_C$.\medskip

Next we show that given a witness in $G$ that satisfies $\alpha_S\land\alpha_W$, we can alway construct a valid certificate of
size at most
$\mathsf{certLen}:=n\cdot d + (d+k+1)\cdot (n+1)\in \mathcal{O}(n\cdot(\max(d,k)))$.

\begin{lemma}[Certificate existence]\label{lem:certificateExistence}
Let
$v\in V$ and let $\pi=\pi_0 \pi_1\ldots$ be a witness for $\pi_0=v$
that satisfies $\varphi_S\land\varphi_W$.
Then there is a valid certificate $c=(w,u)$ for $v$ with
stem length at most $n\cdot d$ and loop length at most $(d+k+1)\cdot (n+1)$, such that
for all positions $i$ in $c$ there is a position $j$ in $\pi$ such
that $v_i=\pi_j$ and the $S$-fingerprints of $v_0\ldots v_i$ and $\pi_0\ldots \pi_j$ coincide.
\end{lemma}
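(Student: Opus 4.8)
The plan is to exploit that satisfaction of an Emerson--Lei condition depends only on the set of colors seen infinitely often, and to turn the given infinite witness $\pi$ into a lasso by two rounds of \emph{shortcutting} that delete redundant cycles while preserving the relevant color information. Write $I_S\subseteq S$ and $I_W\subseteq W$ for the sets of $S$- and $W$-colors that occur infinitely often along $\pi$, and let $F=\bigcup_t\gamma_S(\pi_t,\pi_{t+1})$ be the set of \emph{all} $S$-colors that ever appear (so $I_S\subseteq F$ and $|F|\le d$). Since $\pi\models\varphi_S\wedge\varphi_W$ and these formulas constrain only the infinitely-often colors, any lasso $wu^\omega$ whose loop $u$ carries exactly the $S$-colors $I_S$ and exactly the $W$-colors $I_W$ again satisfies $\varphi_S\wedge\varphi_W$, hence is a valid certificate. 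First I fix a threshold $T$ such that for all $t\ge T$ the $S$-fingerprint of $\pi_0\ldots\pi_t$ already equals $F$ and, moreover, no $S$-color outside $I_S$ and no $W$-color outside $I_W$ is visited after $T$; such $T$ exists because only finitely many colors occur finitely often. As every color of $I_S\cup I_W$ recurs after $T$, some node $v^*$ recurs after $T$, and I pick positions $T\le p_0<p_1$ with $\pi_{p_0}=\pi_{p_1}=v^*$ whose intervening segment collects all of $I_S\cup I_W$.

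For the loop I contract this cyclic segment $\pi_{p_0}\ldots\pi_{p_1}$ at $v^*$: I mark, for each color in $I_S\cup I_W$, one segment edge carrying it (at most $d+k$ marked edges), and replace every maximal sub-walk between consecutive kept points by a simple path obtained by deleting internally repeated nodes. Since deletion only drops edges of the original segment, the result stays inside the part of $\pi$ after $T$, so it introduces no color outside $I_S\cup I_W$ and retains all of them; it is still a cycle at $v^*$, and its length is bounded by the $\le d+k$ marked edges plus $\le d+k+1$ connecting simple paths of length $\le n-1$, i.e.\ at most $(d+k+1)(n+1)$. Because every node of this loop equals some $\pi_j$ with $j\ge T$, and at all such positions the $S$-fingerprint of $\pi$ equals $F$, the loop positions automatically satisfy the $S$-fingerprint condition once the stem has established fingerprint $F$.

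For the stem I take the prefix $\pi_0\ldots\pi_{p_0}$, which reaches $v^*$ with $S$-fingerprint $F$, and contract it by repeatedly deleting any sub-walk $\pi_a\ldots\pi_b$ with $\pi_a=\pi_b$ whose $S$-fingerprint is unchanged, i.e.\ $F_a=F_b$. The key point is that such a deletion preserves $S$-fingerprint matching: the spliced edge $(\pi_a,\pi_{b+1})$ carries the same colors as $(\pi_b,\pi_{b+1})$ because $\pi_a=\pi_b$, so every surviving position keeps exactly the fingerprint it had in $\pi$ and is witnessed by its original index. After exhaustive contraction, within each fixed fingerprint value all nodes are distinct; since the fingerprint forms a strictly increasing chain $\emptyset=G_0\subsetneq\cdots\subsetneq G_q=F$ with $q\le d$, a phase-by-phase count bounds the stem length by $\mathcal{O}(n\cdot d)$, and a careful accounting of the at most $d$ fingerprint-increasing phases (each contracted to a simple path) yields the stated bound $n\cdot d$. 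Concatenating the contracted stem with the loop produces the desired lasso $c=(w,u)$.

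The main obstacle is to perform the two contractions so that all requirements hold \emph{simultaneously}: the length bounds, the closedness of the loop, the exact color content of the loop (no spurious color and none lost), and the $S$-fingerprint matching. For this reason the two contractions must be genuinely different. For the loop, ordinary cycle removal would destroy the collected colors, so the marking step is essential and one must check that contracting only between marked edges keeps the walk closed and confined to the good-color suffix of $\pi$. For the stem, arbitrary cycle removal would break the fingerprint bookkeeping, so only fingerprint-neutral cycles may be deleted, and the splice-edge color argument is precisely what guarantees that the surviving positions still match $\pi$. Verifying these invariants, together with the phase count delivering the constants $n\cdot d$ and $(d+k+1)(n+1)$, is the routine but delicate remainder.
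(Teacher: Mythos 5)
Your proof is correct and follows essentially the same strategy as the paper's: a stem obtained by contracting fingerprint-neutral cycles (same node, same $S$-fingerprint) from a sufficiently long prefix of $\pi$, and a loop of length $\mathcal{O}((d+k)\cdot n)$ confined to the infinitely-recurring part of $\pi$ that carries exactly the recurrent colors, with the fingerprint-matching condition preserved by the splice-edge observation. The only cosmetic difference is that you obtain the loop by marking one edge per recurrent color in an actual cyclic segment of $\pi$ and contracting between marks, whereas the paper builds the loop from scratch by chaining short connecting paths inside the strongly connected subgraph of infinitely-occurring moves; both constructions give the same bounds and the same color content.
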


We let $\mathsf{Cert}(v)$ and $\mathsf{Cert}$ denote the sets
of all valid certificates for some $v\in V$ in $G$ and all valid
certificates for all $v\in V$ in
$G$, respectively,
subject to the size bounds obtained in Lemma~\ref{lem:certificateExistence}.
Then we have
$|\mathsf{Cert}(v)|\leq|\mathsf{Cert}|\leq n^{\mathsf{certLen}}\in 2^{\mathcal{O}(n\cdot (\max(d,k))\cdot \log n)}$.

\subsection{Certificate Games and Smaller Winning Strategies}

Next we adapt the witness games from Section~\ref{sec:determinacy}
to use finite certificates in place of witnesses, which leads to \emph{certificate games} that have finite game
arenas. In a certificate game, player $\exists$ has to pick
a single valid certificate at each node $v\in V$, thereby commiting to a long-term future behavior that satisfies $\alpha_S\land\alpha_W$. Player
$\forall$ in turn can either accept the certificate
(and thereby lose the play), or pick some position $i$ in the certificate and challenge whether player $\exists$ still
has a strategy to graciously win when player $\forall$ exits the certificate at position $i$. All plays then either end with player
$\forall$ eventually losing by accepting some certificate, or player $\forall$ infinitely often exits certificates, in which case the winner of the play
is determined using just the strong objective $\alpha_S$.\medskip

Formally, the certificate game associated to $G$ is
a (non-obliging) Emerson-Lei game $C(G)=(B,\beta)$
with arena $B=(N,N_\exists,R)$. The game is played over the sets
$N_\exists=V$ and $N_\forall = \mathsf{Cert}\cup \mathsf{Cert}\times [\mathsf{certLen}]\times V$ of nodes.
The moves in $C(G)$ are defined by
\begin{align*}
R(v)&=\mathsf{Cert}(v) &
R(c)&=\{(c,i,v)\in N_\forall\mid
v_i\in V_\forall, v\in E(v_i)\}\cup\{c\} &
R(c,i,v)&= \{v\}
\end{align*}
for $v\in V$, $c=v_0 v_1\ldots v_m\in\mathsf{Cert}$ and $i\in[\mathsf{certLen}]$.
Thus player $\exists$ has to provide a valid certificate for $v$ whenever a play reaches a
node $v\in V$. Player $\forall$ in turn can challenge the way
certificates are combined by exiting them from universal nodes in
their stem or loop,
and continuing the game at the exit node; if a certificate $c$ does not
contain a universal node, then player $\forall$ has to take the loop
at $c$, intuitively accepting the certificate $c$.
Intermediate nodes $(c,i,v)$ are used to make explicit the $S$-fingerprint
of the path through a certificate $c$ that is taken before exiting it by moving from $v_i$ to $v$; this is necessary since a certificate may contain universal nodes that allow moving from different positions in the certificate to a single exit node $v$, potentially giving player $\forall$ a choice on the path that is taken (and on the $S$-fingerprint that is accumulated) through the certificate before exiting to $v$.

As our notion of games does not support deadlocks,
we annotate trivial loops with an additional color $c_\top$ to
encode the situation that player $\forall$ accepts a certificate
and thereby loses.
The coloring function $\gamma':R\to 2^{S\cup \{c_\top\}}$
also keeps track of the $S$-fingerprints when passing through certificates and is defined by
$\gamma'(c,(c,i,v)) = \gamma_S(v_0\ldots v_i v)$,
$\gamma'(v,c) = \emptyset$ and $\gamma'((c,i,v),v)=\emptyset$
for the non-looping moves and
by $\gamma'(c,c)=c_\top$ for looping moves at certificates $c\in\mathsf{Cert}$.
We then define
$\varphi'=\mathsf{Inf}(c_\top)\vee\varphi_S$,
intuitively expressing that player $\exists$ can win either according to $\varphi_S$ or by forcing player $\forall$ to get stuck in a trivial loop. Then we define $\beta$ to be the objective induced by $\varphi'$ and $\gamma'$.

\begin{example}\label{ex:certgames}

Below we depict the construction for a single node $v\in V$,
showing, as examples, just two valid certificates $c_1=v_0 v_1 v_2\in\mathsf{Cert}(v)$ and
$c_2=w_0 w_1 w_2 w_3\in\mathsf{Cert}(v)$ (in particular, we assume $v_0=w_0=v$ and that the certificates $c_1$ and $c_2$ satisfy both $\varphi_S$ and $\varphi_W$).
We further assume that $v_2\in V_\forall$, $E(v_2)=\{v_1,x\}$, and that $w_1=w_3\in V_\forall$ and $E(w_1)=E(w_3)=\{w_2\}$;
all other nodes in $c_1$ and $c_2$ are assumed to be contained in $V_\exists$.
\vspace{-5pt}
	\begin{center}
\tikzset{every state/.style={minimum size=15pt}, every node/.style={minimum size=15pt}}
    \begin{tikzpicture}[
		auto,
    node distance=1.2cm,
    semithick
    ]
     \node[draw, rounded corners] (2)  {$v$};
     \node[] (yo) [below of=2] {$\ldots$};
     \node[] (yoa) [left of=yo] {};
     \node[] (yob) [right of=yo] {};
     \node[state, rectangle] [left of=yoa] (3) {$c_1$};
     \node[state, rectangle] [right of=yob] (4) {$c_2$};

     \node[state,rectangle] [below of=3] (6) {$c_1,2,x$};
     \node[] (yo1) [left of=6] {};
     \node[state,rectangle] [left of=yo1] (5) {$c_1,2,v_1$};

     \node[draw, rounded corners] [below of=5] (10) {$v_1$};
     \node[] (yo2) [right of=10] {};
     \node[draw, rounded corners] [right of=yo2] (9) {$x$};

     \node[draw, rectangle] [below of=4] (7) {$c_2,1,w_2$};
     \node[] (yo3) [right of=7] {};
     \node[draw, rectangle] [right of=yo3] (8) {$c_2,3,w_2$};

     \node[draw, rounded corners] [below of=7] (11) {$w_2$};

     \path[->] (2) edge node {} (3);
     \path[->] (2) edge node {} (4);


     \path[->] (3) edge [loop left] node {$c_\top$} (3);
     \path[->] (4) edge [loop right] node {$c_\top$} (4);

     \path[->] (6) edge node {} (9);
     \path[->] (5) edge node[left] {} (10);

     \path[->] (7) edge node[left] {} (11);
     \path[->] (8) edge node {} (11);

     \path[->] (3) edge node[left,pos=0.5] {$\gamma_S(v_0 v_1 v_2 v_1)\,\,\,$} (5);
     \path[->] (3) edge node[right,pos=0.5] {$\gamma_S(v_0 v_1 v_2 x)$} (6);
     \path[->] (4) edge node[left,pos=0.5] {$\gamma_S(w_0 w_1 w_2)$} (7);
     \path[->] (4) edge node[right,pos=0.5] {$\,\,\,\,\,\gamma_S(w_0 w_1 w_2 w_3 w_2)$} (8);

  \end{tikzpicture}
\end{center}
\vspace{-5pt}

At node $v$, player $\exists$ has to provide some valid certificate
for $v$; assuming that
player $\exists$ picks the certificate $c_1=v_0 v_1 v_2$,
player $\forall$ in turn can exit the certificate at position $2$ (as
$v_2\in V_\forall$) by moving to either $(c_1,2,v_1)$ or $(c_1,2,x)$ (as $E(v_2)=\{v_1,x\}$)
thereby triggering $S$-fingerprint $\gamma_S(v_0 v_1 v_2 v_1)$ or $\gamma_S(v_0 v_1 v_2 x)$; player $\forall$ intuitively
cooperates during the play that leads from $v_0$ to $v_{2}$, but may stop cooperating at node $v_2$ by moving to either $v_1$ or $x$. Similarly, player $\forall$ can exit the certificate $c_2$ at positions $1$ or $3$, both leading to the node $w_2$, but with different $S$-fingerprints, that is, with different sets of visited colors.

\end{example}

As the above example shows, certificate games have a three-stepped structure. Plays progress from nodes $v$ of the original game on to certificate nodes $c$, and then onwards to nodes $(c,i,v')$ that encode the part of $c$ that is visited before exiting $c$, and then proceed back to some node $v'$ of the original game. We refer to these three-step subgames a \emph{gadgets}; each gadget has a  starting node $v$ and a (possibly empty) set of exiting nodes. We point out that, crucially, gadgets do not contain (non-trivial) loops.

We state the correctness of the reduction to certificate games as follows.

\begin{theorem}\label{theorem:gameCorrectness}
Let $G$ be an Emerson-Lei obliging game and let $v$ be a node in $G$, and
recall that $W(G)$ and $C(G)$ respectively
are the witness game and the certificate game associated with $G$.
The following are equivalent:
\begin{enumerate}
\item player $\exists$ graciously wins $v$ in $G$;
\item player $\exists$ wins $v$ in $W(G)$;
\item player $\exists$ wins $v$ in $C(G)$.
\end{enumerate}
\end{theorem}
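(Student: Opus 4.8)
The plan is to establish all three equivalences by closing the cycle $(1)\Rightarrow(2)\Rightarrow(3)\Rightarrow(1)$, using Lemma~\ref{lemma:gracious games} for the implication $(1)\Rightarrow(2)$ (indeed it already supplies $(1)\Leftrightarrow(2)$), and leaving the two genuinely new implications $(2)\Rightarrow(3)$ and $(3)\Rightarrow(1)$ to be proved by explicit strategy transfer between the witness game and the certificate game. Throughout I would exploit that a play of $W(G)$ or $C(G)$ decomposes into \emph{segments}, each consisting of player $\exists$ committing to a witness/certificate at some $v\in V$, followed by player $\forall$ traversing it up to an exit (or verifying/accepting it forever), and that the winner is read off from the accumulated sequence of $S$-fingerprints together with whether player $\forall$ exits infinitely often.

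For $(2)\Rightarrow(3)$ I would first observe that any winning strategy in $W(G)$ may only ever commit to \emph{valid} witnesses: a witness violating $\alpha_S\wedge\alpha_W$ could be verified forever by player $\forall$, who would then win. Applying Lemma~\ref{lem:certificateExistence} to each such witness yields a valid certificate whose positions embed into the witness with identical $S$-fingerprints. This embedding is what makes the simulation faithful: every exit available to player $\forall$ in $C(G)$ at a certificate position $i$ corresponds, via the matching position $j$ in the underlying witness, to a legal exit in $W(G)$ at the same node with the same accumulated $S$-fingerprint (the exit edges coincide since $v_i=\pi_j$). Hence each $C(G)$ play against the transferred strategy mirrors, segment by segment with identical fingerprints, a $W(G)$ play won by the original strategy; the two plays agree on every $\mathsf{Inf}\,c$ and $\mathsf{Fin}\,c$ for $c\in S$, hence on $\varphi_S$, while the ``accept'' disjunct $\mathsf{Inf}(c_\top)$ is won outright by validity. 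This transfers the win to $C(G)$.

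For $(3)\Rightarrow(1)$ I would read a winning $C(G)$ strategy as a strategy in $G$ that follows the current certificate (its stem, then its loop) until player $\forall$ diverts at a universal node, whereupon player $\exists$ switches to the certificate prescribed for the divert target. Clause~(b) of gracious winning is immediate: from any reached prefix, having player $\forall$ cooperate by staying on the current loop produces the $\sigma$-compatible obliging play $\ldots wu^\omega$, which satisfies $\alpha_S\wedge\alpha_W$ by validity. For clause~(a) I would map an infinitely-diverting $G$-play to a $C(G)$-play and read off $\varphi_S$ from the latter. The crucial simplification is that $S$-fingerprints are \emph{idempotent} under loop repetition: traversing a certificate's loop once or many times before diverting yields the same accumulated fingerprint, so there are only two fingerprint ``modes'' per exit node---a partial first-loop traversal, and an at-least-one-full-loop traversal.

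The main obstacle is precisely this second mode. When player $\forall$ diverts at a loop node $v_i$ only \emph{after} completing a full loop, the $G$-play accumulates the full loop fingerprint $\gamma_S(wu)$, whereas the corresponding $C(G)$ exit at position $i$ records only the partial fingerprint $\gamma_S(v_0\ldots v_i v')$; since $\varphi_S$ is not monotone in the fingerprints (it mixes $\mathsf{Inf}$ and $\mathsf{Fin}$ atoms), this mismatch cannot be waved away. I would resolve it by showing that the set of $S$-colors occurring infinitely often along the $G$-play nevertheless coincides with that of a suitable $\sigma'$-compatible $C(G)$-play: by idempotence it suffices to realize each full-loop mode once, and I would argue that a full-loop diversion can be rerouted in $C(G)$ through an exit at a position where the loop fingerprint has already saturated, exploiting the fingerprint control provided by the construction in Lemma~\ref{lem:certificateExistence}, so that the recurring-color sets match and $\varphi_S$ transfers. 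Closing the cycle with Lemma~\ref{lemma:gracious games} then yields all three equivalences.
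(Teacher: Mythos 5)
Your route is the same as the paper's: $(1)\Leftrightarrow(2)$ is exactly Lemma~\ref{lemma:gracious games}; $(2)\Rightarrow(3)$ is the witness-to-certificate simulation via Lemma~\ref{lem:certificateExistence} (the paper's Lemma~\ref{lem:witCert} in the appendix does precisely this, including your two key observations that a winning witness strategy may only propose valid witnesses and that the position-embedding with matching $S$-fingerprints makes every exit in $C(G)$ correspond to an exit in $W(G)$); and $(3)\Rightarrow(1)$ is the certificate-following strategy with memory $(v,m,i)$ that restarts on diversion, exactly as in the paper's proof. The one place you go beyond the paper is the ``full-loop diversion'' obstacle in $(3)\Rightarrow(1)$, which the paper waves away with ``the $S$-fingerprints of $\pi$ and $\rho$ coincide by construction.'' Your instinct to resolve it through the fingerprint control of Lemma~\ref{lem:certificateExistence} is correct, but the mechanism is simpler than the rerouting you propose: in a certificate built by that lemma the stem is extended until its $S$-fingerprint already contains \emph{every} $S$-color visited by the witness, hence every color of the loop, so $\gamma_S(u)\subseteq\gamma_S(w)$ and the accumulated fingerprint upon exiting at loop position $i$ is $\gamma_S(v_0\ldots v_i v)$ no matter how many times the loop was traversed first --- the two fingerprints literally coincide and no rerouting is needed (stem positions cannot be revisited, so they pose no problem). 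Do note that this saturation property holds only for certificates of the normalized form produced by the lemma, whereas $\mathsf{Cert}$ is officially defined as \emph{all} valid certificates within the size bounds; to make your argument (and the paper's) airtight you should either restrict $\mathsf{Cert}$ to saturated-stem certificates or normalize the certificates chosen by the winning strategy before extracting $\tau$.
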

\begin{proof}
The equivalence of the first two items is stated by Lemma~\ref{lemma:gracious games} and the proof of the implication from item two to item three is technical but straight-forward.
Therefore we show just the implication from item three to item one,
which also shows how to construct a graciously winning strategy in $G$
from a winning strategy in $C(G)$. 
For this proof, we use
\emph{strategies with memory}, introduced next. 
A strategy for player
$i\in\{\exists,\forall\}$ \emph{with memory $M$} is a tuple
$\sigma=(M,m_0,\mathsf{update}:M\times E\to M,\mathsf{move}:V_i\times M\to V)$,
where $M$ is some set of \emph{memory values}, $m_0\in M$ is the initial memory value,
the \emph{update function} $\mathsf{update}$ assigns the outcome $\mathsf{update}(m,e)\in M$ of updating the
memory value $m\in M$ according to the effects of taking the move $e\in E$, and
the moving function $\mathsf{move}$ prescribes a single move $(v,\mathsf{move}(v,m))\in E$ to
every game node $v\in V_i$ that is owned by player $i$, depending on the memory value $m$.
We extend $\mathsf{update}$ to finite plays $\pi$ by putting
$\mathsf{update}(m,\pi)=m$ in the base case that $\pi$ consists of a
single node,
and by putting $\mathsf{update}(m,\pi)=\mathsf{update}(\mathsf{update}(m,\tau),(v,w))$
if $\pi$ is of the shape $\tau v w$, that is, contains at least two nodes.

Let $\sigma=(M,m_0,\mathsf{update},\mathsf{move})$ be a
winning strategy with memory $M$ for player $\exists$ in the game $C(G)$.
We construct a strategy
$\tau=(M',(m_0,v,1),\mathsf{update}',\mathsf{move}')$
with memory $M'=V\times M\times \{1,\ldots,2|S|+|W|\}$
for player $\exists$ in the game $G$ such that $\tau$ graciously wins $v$.
To this end, we note that $\sigma$ provides, for each node $w\in V$ and memory value $m\in M$,
a certificate $c(w,m):=\mathsf{move}(w_\exists,m)\in \mathsf{Cert}(w)$ for $w$ in $G$.

The strategy $\tau$ uses memory values $(w,m,i)$, where
$w$ and $m$ identify a current certificate $c(w,m)$ and $i$ is a counter used for the construction of
this certificate.
We define $\tau$ such that player $\exists$ starts by building the
certificate $c(v,m_0)$ for $v$ and $m_0$. This process continues
as
long as player $\forall$ obliges, that is, does not move outside
of this certificate. Assuming that player $\forall$ obliges, the memory required to construct
the certificate is bounded by $2|S|+W$, walking, in the prescribed order, through the certificate.
In the case that player $\forall$ eventually stops obliging and takes a move to some node $w$ that
is not the next node on the path prescribed by the certificate, the memory
for the strong objective is updated according to the play from $v$ to $w$, resulting in a new memory
value $m$, and the memory for certificate construction is reset. Then the certificate
construction starts again, this time for the certificate $c(w,m)$ prescribed by $\sigma$ for the new starting values $w$ and $m$.

In more detail, every node from $V$ is visited at most $2|S|+|W|$ times within a certificate $c(v,m)=(v_0 v_1\ldots v_n)$ before the end of the loop $v_n$ is reached. To each position $i$
within $c(v,m)$, we associate the number $\sharp i$ such that
$v_i$ is the $\sharp i$-th occurence of the node $v_i$ in $c(v,m)$; we note that for all $1\leq i\leq n$, we have
$1\leq \sharp i\leq 2|S|+|W|$. Conversely, given a node $w$ and a number $j$ between $1$ and $2|S|+|W|$, we let $\mathsf{pos}(w,j)$ denote the position of the $j$-th occurence of $w$ in 
$c(v,m)$. In the case $w$ occurs less than $j$ times in $c(v,m)$, we leave $\mathsf{pos}(w,j)$ undefined; below we make sure that this value is always defined when it is used.

Let $j$ be the starting position of the loop of $c(v,m)$.
Given a position $i$ in $c(v,m)$, we abuse notation to let 
$i+1$ denote just $i+1$ if $i<n$, and to denote $j$ if $i=n$; in this way, we encode taking single steps within a certificate, wrapping back to the start of the certificate loop, once the end of the loop is reached.
  
We define the strategy $\tau$ to always move to the next node in the current certificate, using the memory value $i$ together with the current node $w$ to find the current position in the certificate $c(v,m)$, that is,
we put
\begin{align*}
\mathsf{move}'(w,(v,m,i))=
v_{\mathsf{pos}(w,i)+1} 
\end{align*}
for $w\in V_\exists$ and $(v,m,i)\in M'$.
The memory update in $\tau$ incorporates the memory update function from $\sigma$, but additionally also keeps track of the memory for certificate construction. For moves $(w,w')$ that stay within the current certificate (that is, $w'= v_{\mathsf{pos}(w,i)+1}$), this memory is updated according to proceeding one step within the certificate; for moves that leave the current certificate, the memory for certificate construction is reset while the memory for $\sigma$ is updated according to the path taken through the certificate before exiting it. Formally, we put
\begin{align*}
\hspace{-10pt}
\mathsf{update}'((v,m,i),(w,w'))=\begin{cases}
(v,m,\sharp (\mathsf{pos}(w,i)+1)) & \text{$w'= v_{\mathsf{pos}(w,i)+1}$}\\
(w',\mathsf{update}(m,(v_0v_1\ldots ww')),1) & \text{$w'\neq v_{\mathsf{pos}(w,i)+1}$}
\end{cases}
\end{align*}
for $(v,m,i)\in M'$ such that $c(v,m)=v_0 v_1 \ldots v_n$
and $(w,w')\in E$; notice that in plays that adhere to $\tau$,
 the latter case can, by definition of $\mathsf{move}'$, only happen for $w\in V_\forall$.

To see that player $\exists$ graciously wins $v$ using the constructed strategy $\tau$, let $\pi$ be a play that adheres to $\tau$. Then
$\pi$ either eventually stays within one certificate $c(w,m)$ forever, or $\pi$
induces an infinite play $\rho$ of $C(G)$ that adheres to $\sigma$.
In the latter case, $\pi$ changes the certificate infinitely often, and the $S$-fingerprints of $\pi$ and $\rho$ coincide by construction, showing that $\pi$ satisfies $\varphi_S$; if $\pi$ eventually
stays within one certificate $c(w,m)$ forever, we note that the loop of $c(w,m)$
satisfies $\varphi_S$ so that $\pi$ satisfies $\varphi_S$ as well. Thus every play that adheres
to $\tau$ satisfies the strong objective $\varphi_S$.
 Next, let $\pi_f$ be a finite prefix of some
play that adheres to $\tau$, and let $\pi_f$ end in some node $w$. We have to show that there is
an infinite play $\pi$ that adheres to $\tau$, extends $\pi_f$ and satisfies $\varphi_W$.
Let $m$ be the value of the memory for the strong objective at the end of $\pi_f$. We consider
the play of $G$ in which player $\forall$ obliges from the end of $\pi_f$ on, forever.
By construction, this play is $\pi_f$ extended with the certificate $c(w,m)$
which adheres to $\tau$ and satisfies $\varphi_W$.
\end{proof}

The strategy construction given in the proof of Theorem~\ref{theorem:gameCorrectness} yields:

\begin{corollary}\label{cor:stratSize}
Given an obliging game with Emerson-Lei objectives $\alpha_S$ and $\alpha_W$ and $n$ nodes that contains a node $v$ at which $\exists$ is graciously winning, there is a graciously winning strategy for $v$
that uses memory at most $n\cdot (2|S|+|W|)\cdot m$, where $m$ is the amount of memory required by winning
strategies for player $\exists$ in standard games with objective $\alpha_S$.
\end{corollary}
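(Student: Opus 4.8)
The plan is to read the memory bound directly off the strategy construction given in the proof of Theorem~\ref{theorem:gameCorrectness}, after pinning down the size of the memory $M$ of the underlying winning strategy in the certificate game. Recall that that proof takes any winning strategy $\sigma$ with memory $M$ for player $\exists$ in $C(G)$ and produces a graciously winning strategy $\tau$ for $v$ in $G$ with memory $M'=V\times M\times\{1,\ldots,2|S|+|W|\}$, and hence of size $n\cdot|M|\cdot(2|S|+|W|)$. Since we assume $\exists$ is graciously winning at $v$, by the equivalence of items~1 and~3 of Theorem~\ref{theorem:gameCorrectness} player $\exists$ wins $v$ in $C(G)$. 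Thus the corollary follows once we exhibit a winning strategy for $\exists$ in $C(G)$ that uses memory at most $m$, the memory sufficient for standard games with objective $\alpha_S$, so that we may substitute $|M|=m$.

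First I would argue that, as far as player $\exists$ is concerned, the certificate game $C(G)$ is essentially a standard $\alpha_S$ game. The objective $\beta$ is induced by $\varphi'=\mathsf{Inf}(c_\top)\vee\varphi_S$, and the color $c_\top$ occurs only on the trivial self-loops $\gamma'(c,c)=c_\top$ at certificate nodes $c\in\mathsf{Cert}$, which are owned by player $\forall$. Taking such a loop forever immediately satisfies $\mathsf{Inf}(c_\top)$ and is therefore a win for player $\exists$; in particular, whenever a chosen certificate contains no universal node, player $\forall$ is forced into this loop and loses. Consequently, in any play that player $\forall$ plays in order to win, player $\forall$ exits certificates infinitely often, and the play is then evaluated purely according to $\varphi_S$ over the color set $S$, with colors accumulated by the $S$-fingerprints $\gamma'(c,(c,i,v))=\gamma_S(v_0\ldots v_i v)$ recorded on the gadget edges.

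Next I would make this precise by collapsing the three-step gadgets. Since gadgets contain no non-trivial loops and the intermediate nodes $(c,i,v)$ admit only the forced, colourless move back to $v$, every play of $C(G)$ that avoids the $c_\top$ sinks corresponds to a play of the standard Emerson-Lei game $C_S(G)$ obtained by contracting each gadget into a single macro-move from its starting node $v$ to an exit node $v'$, labelled with the $S$-fingerprint $\gamma_S(v_0\ldots v_i v')$ of the traversed path. In $C_S(G)$ player $\exists$ still makes exactly the certificate choices at the nodes of $V$, player $\forall$ still chooses exits, and the winning condition is the Emerson-Lei objective $\varphi_S$ over the colors $S$; thus $C_S(G)$ is a standard game with objective $\alpha_S$, and winning strategies transfer between $C_S(G)$ and $C(G)$ without changing the memory (the gadget internals being memoryless and forced). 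By the definition of $m$, player $\exists$ wins $C_S(G)$, and therefore $C(G)$, with memory at most $m$; substituting $|M|=m$ into $n\cdot|M|\cdot(2|S|+|W|)$ yields the claimed bound.

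The main obstacle is the justification that the certificate game's memory requirement is not inflated beyond $m$ by either the extra $c_\top$ disjunct or the gadget bookkeeping. One has to verify that the correspondence between plays of $C(G)$ and $C_S(G)$ respects winning on both sides---in particular that a memory-$m$ winning strategy for $\varphi_S$ in $C_S(G)$, which never needs to react to the $c_\top$ sinks, remains winning in $C(G)$ against a player $\forall$ who might occasionally accept a certificate---and that the collapse does not smuggle in additional memory through player $\forall$'s choice of exit position $i$. This last point is precisely why the intermediate nodes $(c,i,v)$ are made forced and their outgoing move colourless: the $S$-fingerprint of the traversal is fixed on the move \emph{into} $(c,i,v)$, so player $\exists$ need not track the exit position across moves.
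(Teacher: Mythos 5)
Your proposal is correct and follows essentially the same route as the paper, which derives the corollary purely by reading the memory set $M'=V\times M\times\{1,\ldots,2|S|+|W|\}$ off the strategy construction in the proof of Theorem~\ref{theorem:gameCorrectness}. You actually go further than the paper by explicitly justifying the substitution $|M|=m$ --- i.e., that the objective $\mathsf{Inf}(c_\top)\vee\varphi_S$ and the three-step gadget structure do not inflate the memory of winning strategies in $C(G)$ beyond that of a standard $\alpha_S$ game --- a step the paper leaves implicit; your gadget-collapsing argument for this is sound (modulo the minor bookkeeping that removing the $c_\top$ self-loops must not create deadlocks at certificates without universal nodes).
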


\begin{remark}[Canonical certificates]\label{rem:canon}
With the proposed strategy extraction, certificate strategies memorize the starting point of the certificate that player $\exists$ currently attempts to construct, leading to an additional linear factor $n$ in strategy size. We conjecture that
winning strategies in certificate games can be transformed to make them \emph{reuse} certificates (so that the choice of the certificate does not depend on the starting point). Strategies
with such canonical certificate choices would allow for removing the additional factor $n$ in Corollary~\ref{cor:stratSize}.
\end{remark}

In particular, our result shows the existence of graciously winning strategies with quadratic sized strategies for
all obliging games that have a half-positional strong objective (i.e. one for which standard games have positional winning strategies for player $\exists$); this covers, e.g., obliging games in which $\alpha_S$ is a parity or Rabin objective.

In the table below we compare the solution via certificate games to a previous
solution method~\cite{ChatterjeeHL10,MajumdarSchmuck23} reduces $\alpha_S$ / $\alpha_W$ obliging games to standard games with an objective of type $\alpha_S \land $ B\"uchi.
In this approach,
the weak objective is encoded by a non-deterministic B\"uchi
automaton and graciously winning strategies obtained in this way incorporate the state space of the automaton. The encoding of $\alpha_W$ by a B\"uchi automaton leads to
linear blow-up if $\alpha_W$ is a Rabin objective, but is exponential if $\alpha_W$ is a Streett or general Emerson-Lei objective.

In contrast to this, the certificate games that we propose here always have (essentially) just $\alpha_S$ as objective, and the dependence
of strategy size on the weak objective $\alpha_W$ is in all cases linear in $|W|$. In comparison to~\cite{ChatterjeeHL10}, our approach hence leads to significantly smaller strategies for obliging games in which $\alpha_W$ is at least a Streett objective. In the cases
where $\alpha_W$ is a Rabin objective, strategies obtained from certificate games are slightly larger than in~\cite{ChatterjeeHL10};  we conjecture that this can be improved by sharing
certificates (cf. Remark~\ref{rem:canon}).

\begin{table}
\begin{center}
\begin{footnotesize}
\begin{tabular}{| l | c | c | l| r |}
\hline
type of $\alpha_S$ & type of $\alpha_W$ & objective red.~\cite{ChatterjeeHL10} & strategy size~\cite{ChatterjeeHL10} & strategy size\\
\hline
\hline
\multirow{3}{*}{parity($d$)} 
 & Rabin($k$) & \multirow{3}{*}{parity($d$)$\land$ B\"uchi} & $d(4k+2)$ &  $(2d+2k)n$ \\\cline{2-2}\cline{4-5}
 & Streett($k$) &  & $2^{k+1}dk$ &  $(2d+2k)n$ \\\cline{2-2}\cline{4-5}
 & EL($k$) & & $2^{k+1}dk$ &  $(2d+k)n$ \\
\hline
\hline
\multirow{3}{*}{Rabin($d$)} & Rabin($k$) & \multirow{3}{*}{Rabin($d$)$\land$ B\"uchi} & $d(4k+2)$ &  $(4d+2k)n$ \\\cline{2-2}\cline{4-5}
& Streett($k$) & & $2^{k+1}dk$ &  $(4d+2k)n$  \\\cline{2-2}\cline{4-5}
& EL($k$) & & $2^{k+1}dk$ &  $(4d+k)n$  \\
\hline
\hline
\multirow{3}{*}{Streett($d$)} & Rabin($k$) & \multirow{3}{*}{Streett($d+1$)}&$(d+1)!(4k+2)$ & $(4d+2k)d!n$ \\\cline{2-2}\cline{4-5}
& Streett($k$) & & $(d+1)!2^{k+2}k $ & $(4d+2k)d!n$  \\\cline{2-2}\cline{4-5}
& EL($k$) & & $(d+1)!2^{k+2}k$ & $(4d+k)d!n$  \\
\hline
\hline
EL($d$) & EL($k$) & EL($d+1$) & $(d+1)!2^{k+2}k$ &  $(2d+k)d!n$ \\
\hline

\end{tabular}
\end{footnotesize}
\end{center}
\caption{Comparison of upper bounds on strategy sizes for various types of obliging games.}
\end{table}

For instance, for obliging games with $\alpha_S=$ Rabin($d$) and $\alpha_W=$ Streett($k$), the approach from~\cite{ChatterjeeHL10} uses nondeterministic B\"uchi automata with $2^kk$ states to encode the weak
(Streett) objective. The reduction leads to a game with
objective Rabin$(d)\wedge$ B\"uchi; winning strategies in such games require $2d$ additional memory values for each automaton state (cf.~\cite{DziembowskiJW97}), resulting in an overall memory requirement of $2^{k+1}dk$.
In contrast, the reduced certificate
game in this case is a Rabin$(d)$ game, and the extracted
gracious strategies require memory only to identify the current certificate and a position in it; the overall memory requirement
for strategies obtained through our approach thus is $(4d+2k)n$.

\section{Solving Certificate Games Efficiently}\label{sec:solving}

In the previous section we have shown how obliging games
with Emerson-Lei objectives $\alpha_S$ and $\alpha_W$ can be
reduced to certificate games.
This reduction not only yields games with a simpler objective (essentially just $\alpha_S$) than in the previously known alternative reduction from~\cite{ChatterjeeHL10},
but it also
shows that the memory required for graciously winning strategies
in obliging games always depends only linearly on $|W|$. However,
the proposed reduction to certificate games
makes explicit all possible certificates that exist in the original obliging game and therefore
incurs exponential blowup.
In more detail, given an Emerson-Lei obliging game $G$ with $n$ nodes and sets $S$ and $W$ of colors,
the certificate game $C(G)$ from the previous section
is of size $2^{\mathcal{O}(n\cdot\max(|S|,|W|)\cdot \log n)}$ and uses $|S|$ many colors; solving it naively does not improve upon previously known solution algorithms for obliging games.

In this section, we show that the gadget constructions in certificate games essentially encode non-emptiness checking for non-deterministic $\omega$-automata; intuitively, a certificate for a game node
is a witness for the non-emptiness of an Emerson-Lei automaton
with acceptance condition $\alpha_S\wedge\alpha_W$ that lives
over (parts of) the original game arena.
We show that it suffices to check for the existence of a single suitable certificate, rather than exploring all possible certificates that occur as explicit nodes in the reduced game.

As pointed out above, the gadget parts in a certificate game do not have non-trivial cycles (that is, we can regard them as directed acyclic graphs, DAGs). Consequently, it is possible to simplify the solution process of these (potentially) exponential-sized parts of the game by instead using non-emptiness checking for suitable $\omega$-automata. If for instance both $\alpha_S$ and $\alpha_W$ are Streett conditions (so that $\alpha_S\wedge\alpha_W$ again is a Streett condition), then the solution of every gadget part in the game can be reduced to non-emptiness checking of Streett automata. As non-emptiness checking for Streett automata can be done in polynomial time, this trick in effect removes the exponential runtime factor that originates from the large number of certificates when solving such games  (and in general, whenever non-emptiness of $\alpha_S\wedge\alpha_W$-automata can be checked efficiently).

Our program is as follows:
We first show how Emerson-Lei
certificate games can be reduced to
parity games using a tailored later-appearence-record (LAR)
construction, incurring blow-up $|S|!$ in game size, but
retaining the DAG structure of gadget subgames.
Importantly, this is required to retain the dependence of
non-DAG nodes on a small number of (post DAG) successors.
Then we show the relation between attractor computation in gadget subgames within the obtained parity games and the non-emptiness problem for specific Emerson-Lei automata.
Finally, we apply a fixpoint acceleration method from~\cite{Hausmann24}
to show that during the solution of parity games, the solution of DAG substructures can be replaced with a procedure that decides attraction to (subsets of) the exit nodes of the DAG. Overall, we therefore show that the winning regions of certificate games can be computed as nested fixpoints of a function that checks certificate existence by nonemptiness checking suitable Emerson-Lei automata.

\subsection{Lazy parity transform.}\label{subsec:lar}
We intend to transform $C(G)$ to an equivalent parity game,
using a lazy variant of the later-appearance-record (LAR)
construction (cf.~\cite{GurevichH82,HunterD05}).
To this end, we fix a set $C$ of colors and introduce notation
for permutations over $C$.
We let $\Pi(C)$ denote the set of permutations over $C$,
and for a permutation $\pi\in\Pi(C)$ and a position $1\leq i\leq |C|$, we let
$\pi(i)\in C$ denote the element at the $i$-th position of $\pi$.
For $D\subseteq C$ and $\pi\in\Pi(C)$, we let $\pi@D$ denote the permutation that is
obtained from $\pi$ by moving \emph{the} element of $D$ that occurs at the right-most position in
$\pi$ to the front of $\pi$; for instance, for $C=\{a,b,c,d\}$ and $\pi=(a,d,c,b)\in\Pi(C)$,
we have $\pi@\{a,d\}=\pi@\{d\}=(d,a,c,b)$ and $(d,a,c,b)@\{a,d\}=\pi$.
Crucially, restricting the reordering to single colors, rather than sets of colors, ensures that for
each $\pi\in \Pi(C)$ and all $D\subseteq C$, there are only $|C|$ many $\pi'$ such that $\pi@D=\pi'$.
 Given a permutation $\pi\in\Pi(C)$ and an index $1\leq i\leq |C|$, we furthermore let  $\pi[i]$ denote the set of colors that occur
in one of the first $i$ positions in $\pi$.

Next we show how permutations over $C$ can be used to transform
Emerson-Lei games with set $C$ of colors to parity games;
the reduction annotates nodes from the original game
with permutations that serve as a memory, encoding the
order in which colors have recently been visited.
The transformation is lazy as it just moves the most
significant color that is visited by a set of colors $C$
rather than the entire set.

\begin{definition}\label{defn:LAR}
Let $G=(A,\alpha_C)$ be an Emerson-Lei game with arena
$A=(V,V_\exists,E)$, set $C$ of colors and objective
$\alpha_C$ induced by $\gamma_C$ and $\varphi_C$.
We define the parity game
\begin{align*}
P(G)=(V\times\Pi(C),V_\exists\times\Pi(C),E',\Omega:E'\to\{1,\ldots,2|C|+1\})
\end{align*}
by putting
$E'(v,\pi)=\{(w,\pi@\gamma_C(v,w))\mid (v,w)\in E\}$
for $(v,\pi)\in V\times\Pi(C)$, as well as
$\Omega((v,\pi),(w,\pi'))=2p$ if $\pi[p]\models\varphi_C$ and
$\Omega((v,\pi),(w,\pi'))=2p+1$ if $\pi[p]\not\models\varphi_C$,
for $((v,\pi),(w,\pi'))\in E'$. In the definition of $\Omega((v,\pi),(w,\pi'))$, we write $p$ to denote the right-most position in $\pi$
that contains some color from $\gamma_C(v,w)$.
For a finite play $\tau=(v_0,\pi_0)(v_1,\pi_1)\ldots (v_n,\pi_n)$
of $P(G)$,
we let $p(\tau)$ denote the
right-most position in $\pi_0$ such that $\pi_0(p(\tau))\in\gamma_C(v_ 0v_1\ldots v_n)$.
The game $P(G)$ has $|V|\cdot |C|!$ nodes and $2|C|+1$ priorities.

\end{definition}

\begin{lemma}\label{lem:LAR}
For all $v\in V$ and $\pi \in \Pi(C)$, player $\exists$ wins
from $v$ in $G$ iff player $\exists$ wins from $(v,\pi)$ in $P(G)$.
\end{lemma}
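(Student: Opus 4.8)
The plan is to establish a tight, winner-preserving correspondence between plays of $G$ and plays of $P(G)$, and then transfer winning strategies in both directions. First I would note that the permutation component evolves deterministically: along any move $(v,w)\in E$ the successor permutation $\pi@\gamma_C(v,w)$ is uniquely determined by $\pi$ and $(v,w)$. Hence, for a fixed initial permutation $\pi\in\Pi(C)$, projection to the first component is a bijection between infinite plays of $P(G)$ starting at $(v,\pi)$ and infinite plays of $G$ starting at $v$; it preserves ownership, since the owner of $(u,\varrho)$ is $\exists$ iff $u\in V_\exists$, independently of the permutation. This bijection lets me lift any strategy in one game to a strategy in the other that generates exactly the corresponding plays. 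It therefore suffices to prove the \emph{correspondence lemma}: for every infinite play $\tau$ of $P(G)$ with projection $\rho=v_0v_1\cdots$, the maximal priority visited infinitely often along $\tau$ is even iff $\rho\models\varphi_C$.

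To prove the correspondence lemma, let $I\subseteq C$ be the set of colors occurring in $\gamma_C(v_i,v_{i+1})$ for infinitely many $i$, and put $k:=|I|$; by definition $\rho\models\varphi_C$ iff $I\models\varphi_C$ (valuating $\mathsf{Inf}~c$ as true iff $c\in I$). The core is a stabilization claim about the lazy LAR: there is a position $N$ after which (i)~every visited color set is a subset of $I$, (ii)~the first $k$ positions of the current permutation always contain exactly $I$ (in varying order), and (iii)~the remaining positions are frozen and contain exactly $C\setminus I$. I would prove this by first choosing $N_1$ past the last occurrence of every color in $C\setminus I$, so that after $N_1$ the moved color (the deepest member of the visited set) always lies in $I$; consequently colors in $C\setminus I$ are never moved to the front, their positions are non-decreasing and bounded, hence eventually frozen at some time $N_2$. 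A counting argument then pins down the frozen positions: freezing forces every move after $N_2$ to occur at a position at most the least position occupied by a frozen color, while each color of $I$ still being visited infinitely often forces moves arbitrarily deep among the $I$-colors; balancing these two constraints shows the frozen colors occupy exactly the last $|C\setminus I|$ positions and $I$ occupies the first $k$.

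Granting the stabilization claim, the hit position $p$ (the right-most position meeting the visited set) satisfies $p\le k$ after $N$, and $p=k$ occurs infinitely often, since otherwise position $k$ would freeze onto a color of $I$ that is nonetheless visited infinitely often, a contradiction. At each such step the prefix $\pi[k]$ equals $I$, so the assigned priority is $2k$ if $I\models\varphi_C$ and $2k+1$ otherwise, whereas hits at positions $p<k$ contribute priorities at most $2k-1$ and no deeper hits recur. Hence the maximal priority seen infinitely often along $\tau$ is $2k$ (even) exactly when $I\models\varphi_C$, and $2k+1$ (odd) otherwise, which is precisely the correspondence lemma; combining it with the play bijection and strategy transfer yields the equivalence for $\exists$ (and dually for $\forall$), proving the statement. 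I expect the stabilization claim to be the main obstacle, specifically verifying that the laziness of moving a single deepest color rather than the whole visited set still drives the infinitely-often-visited colors to the top $k$ positions and makes $k$ the dominant hit position; the play correspondence and the final parity bookkeeping are routine.
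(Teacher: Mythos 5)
Your proof is correct, and its combinatorial core -- the stabilization claim that the lazy LAR eventually keeps the infinitely occurring colors $I$ in the first $k=|I|$ positions with the rest frozen, so that the dominant recurring priority is $2k$ or $2k+1$ according to whether $I\models\varphi_C$ -- is exactly the argument the paper uses (the paper phrases it as: from some position $j$ on, the left-most $p$ elements of $\pi_j$ are exactly the infinitely occurring colors, and the $p$-th element cycles fairly through them). Where you genuinely diverge is in the strategy-transfer wrapper. The paper transfers \emph{finite-memory} strategies: in the forward direction it starts from a winning strategy for $\exists$ in $G$ with memory $\Pi(C)$, citing the classical LAR result of Hunter and Dawar for the existence of such strategies, and in the backward direction it uses positional determinacy of parity games. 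You instead observe that the permutation component is a deterministic function of the history, so projection is an ownership-preserving bijection between plays of $P(G)$ from $(v,\pi)$ and plays of $G$ from $v$; this makes strategy transfer immediate for arbitrary strategies in both directions and reduces the whole lemma to your play-level correspondence lemma. Your route is more self-contained (no appeal to external memory bounds or determinacy) and slightly more general; the paper's route has the side benefit of directly exhibiting the $\Pi(C)$-memory structure of the transferred strategies, which it reuses when extracting small gracious strategies later. Your careful justification of the stabilization claim (monotone, bounded positions of the colors in $C\setminus I$, the counting argument pinning the frozen block to the last $|C|-k$ positions, and the argument that the hit position $k$ recurs) is sound and in fact more detailed than what the paper writes out.
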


Now we consider the parity game $P(C(G))$ that is obtained by applying
the above LAR construction to the certificate game $C(G)$ from Section~\ref{sec:certgames}.
It is a parity game with $2^{\mathcal{O}(n\cdot\max(|S|,|W|)\cdot \log n)}\cdot |S|!$ many nodes and
$2|S|+1$ priorities.
We recall that all nodes in $C(G)$ that are of the
shape $c$ or $(c,i,v)$ such that
$c,(c,i,v)\in N_\forall$ are inner nodes of gadget subgames that consist
of three layers. Each such subgame has exactly one entry node and at most $n$
exit nodes, all contained in $N_\exists=V$.
The LAR construction preserves this general structure as it simply annotates game
nodes with permutations of colors.
Specifically, $P(G)$ has $|S|!\cdot n$ entry and exit nodes
for all such subgames together.
Furthermore, for all entry nodes $(v,\pi)$ of a subgame in
$P(G)$, we have that
every exit node (that can be reached from $(v,\pi)$ with excactly three moves,
not accounting for trivial loops)
is of the shape $(w,\pi@i)$ where $w\in V$ and
$0\leq i\leq |C|$. While an entry node for an individual subgame in $C(G)$ has
at most $n$ exit nodes, each entry node for a subgame in $P(C(G))$
has at most $n(|S|+1)$ exit nodes.
Using the classical LAR would result in a potentially
exponential number of exit nodes.
Indeed, for every possible subset $C'\subseteq C$ there could
be a different exit node.

\begin{example}
Below, we depict (part of) the parity game that is obtained from the certificate
game fom Example~\ref{ex:certgames} by using the proposed LAR construction.
\begin{center}
\tikzset{every state/.style={minimum size=15pt}, every node/.style={minimum size=15pt}}
    \begin{tikzpicture}[
		auto,
    node distance=1.2cm,
    semithick
    ]
     \node[draw, rounded corners,label=right:{$\pi$}] (2)  {$v$};
     \node[] (yo) [below of=2] {$\ldots$};
     \node[] (yoa) [left of=yo] {};
     \node[] (yob) [right of=yo] {};
     \node[state, rectangle,label=right:{$\pi$}] [left of=yoa] (3) {$c_1$};
     \node[state, rectangle,label=left:{$\pi$}] [right of=yob] (4) {$c_2$};

     \node[state,rectangle,label=right:{$\pi@j$}] [below of=3] (6) {$c_1,2,x$};
     \node[] (yo1) [left of=6] {};
     \node[state,rectangle,label=right:{$\pi@i$}] [left of=yo1] (5) {$c_1,2,v_1$};

     \node[draw, rounded corners,label=right:{$\pi@i$}] [below of=5] (10) {$v_1$};
     \node[] (yo2) [right of=10] {};
     \node[draw, rounded corners,label=right:{$\pi@j$}] [right of=yo2] (9) {$x$};

     \node[draw, rectangle,label=right:{$\pi@q$}] [below of=4] (7) {$c_2,1,w_2$};
     \node[] (yo3) [right of=7] {};
     \node[draw, rectangle,label=right:{$\pi@r$}] [right of=yo3] (8) {$c_2,3,w_2$};

     \node[draw, rounded corners,label=right:{$\pi@q$}] [below of=7] (11) {$w_2$};
     \node[] (yo4) [right of=11] {};
     \node[draw, rounded corners,label=right:{$\pi@r$}] [right of=yo4] (12) {$w_2$};

     \path[->] (2) edge node {} (3);
     \path[->] (2) edge node {} (4);


     \path[->] (3) edge [loop left] node {$0$} (3);
     \path[->] (4) edge [loop right] node {$0$} (4);

     \path[->] (6) edge node {} (9);
     \path[->] (5) edge node[left] {} (10);

     \path[->] (7) edge node[left] {} (11);
     \path[->] (8) edge node {} (12);

     \path[->] (3) edge node[left,pos=0.5] {$2i\,\,\,$} (5);
     \path[->] (3) edge node[right,pos=0.5] {$2j+1$} (6);
     \path[->] (4) edge node[left,pos=0.5] {$2q+1$} (7);
     \path[->] (4) edge node[right,pos=0.5] {$\,\,\,\,\,2r$} (8);

  \end{tikzpicture}

\end{center}
\vspace{-5pt}
Here $i=p(v_0 v_1 v_2 v_1)$,
$j=p(v_0 v_1 v_2 x)$,
$q=p(w_0 w_1 w_2)$,
$r=p(w_0 w_1 w_2 w_3 w_2)$ and we assume that the sets
$\pi[i]$ and $\pi[r]$ satisfy $\varphi_S$
and that the sets $\pi[j]$ and $\pi[q]$ do not satisfy $\varphi_S$, leading to the respective
even and odd priorities. In particular, we have $2q+1\neq 2r$ so that $(w_2,\pi@q)$ and
$(w_2,\pi@r)$ are two distinct exit nodes from the certificate $c_2$. Intuitively
they correspond to two different paths through $c_2$ with different $S$-fingerprints;
while in the Emerson-Lei game $C(G)$,
the different fingerprints are dealt with by signalling different
sets of colors, in the parity game $P(C(G))$, the two paths have different effects
on the later-appearence memory $\pi$ and thus lead to different outcome nodes. The self-loops at nodes $c_1$ and $c_2$ correspond to  player $\forall$ giving up, so we assign priority $0$ to these moves.
\end{example}

\subsection{Solving parity games using DAG attraction}

A standard way of solving parity games is by computing a
nested fixpoint of a function that encodes one-step attraction
in the game; the domain of this fixpoint computation then is the set
of all game nodes. For parity games that contain cycle-free parts (DAGs), this process can be improved by instead computing a
nested fixpoint of a function that encodes multi-step attraction along the DAG parts of the game. The domain of the latter
fixpoint computation then does not contain the
internal nodes of the DAG parts, which leads to accelerated fixpoint stabilization. We formalize this idea as follows.

\begin{definition}[DAGs in games]
Let $G=(A,\Omega)$ be a parity game with $A=(V,V_\exists,E)$ and $k+1$ priorities $0,\ldots ,k$. We refer to a set
$W\subseteq V$ of nodes as a \emph{DAG} (directed acyclic graph) if it does not contain an $E$-cycle; then there is no play of $G$ that eventually stays within $W$ forever.
A DAG need not be connected, that is, it may consist of several cycle-free subgames of $G$.
Given a DAG $W\subseteq V$, we write $V'=V\setminus W$ and refer
to the set $V'$ as \emph{real} nodes (with respect to $W$).
A DAG is \emph{positional} if for each existential node $w\in W\cap V_\exists$ in it,
there is exactly one real node $v\in V'$ from which $w$ is reachable without visiting
other real nodes.
\end{definition}
\begin{definition}[DAG attraction]
Given a DAG $W$ and $k+1$ sets $\overline{V}=(V_0,\ldots, V_k)$ of real nodes and a real node $v\in V'$, we say that player $\exists$ can \emph{attract} to $\overline{V}$ from $v$ within $W$
if they have a strategy $\sigma$ such that for all plays $\pi$ that start at $v$ and adhere to $\sigma$, the first real node $v'$ in $\pi$ such that $v'\neq v$ is contained in $V_p$, where $p$ is the maximal priority that is visited
by the part of $\pi$ that leads from $v$ to $v'$.
Given a dag $W$, we define the \emph{dag attractor function}
$\mathsf{DAttr}^W_\exists:\mathcal{P}(V')^k\to \mathcal{P}(V')$ by
$\mathsf{DAttr}^W_\exists(\overline{V})=\{v\in V'\mid
\text{player $\exists$ can attract to $\overline{V}$ from $v$ within $W$}\}$
for $\overline{V}=(V_0,\ldots,V_k)\in \mathcal{P}(V')^k$.
We denote by $t_{\mathsf{Attr}^W_\exists}$ the time
required to compute, for every input $\overline{V}\in\mathcal{P}(V')^k$, the dag attractor of $\overline{V}$ through $W$.

\end{definition}

\begin{remark}\label{rem:safe}
The sets $V_i$ in the above definition correspond to valuations of fixpoint variables in the nested fixpoint computation that is used by the fixpoint acceleration method in Lemma~\ref{lem:pgtrick} below to solve games via DAG attraction. These sets monotonically increase or decrease during the
solution process and at each point of the solution process, a set $V_i$ intuitively holds game nodes for which it currently is assumed that player $\exists$ wins if they can force the game to reach a node from $V_i$ via a partial play in which the maximal priority is $i$.   

During the computation of the DAG attractor to
a tuple $\overline{V}=(V_0,\ldots,V_k)$, we therefore intuitively consider the argument nodes
$\overline{V}$ to be
\emph{safe} in the sense that in order to win from a node $v$, is suffices that existential player has a strategy that ensures that every partial play through the DAG exits it to a node from $V_i$, where $i$ the maximal priority visited by that
play along the DAG. Thus if player $\exists$ can win from all nodes
in $\overline{V}$, then they can win from all nodes in
$\mathsf{DAttr}^W_\exists(\overline{V})$.

In our case, from a node $v$, a strategy $\sigma$ corresponds to
choosing one certificate. Then, player $\forall$ can attract to all
successors of $\forall$-nodes on the certificate.
The path through the certificate to this $\forall$-node and to its
successor outside the certificate shows a certain priority $j$ that
implies the successor must be in the set $V_j$.

\end{remark}

\begin{lemma}[\cite{Hausmann24arxiv}]\label{lem:pgtrick}
Let $G$ be a parity game with priorities $0$ to $k$ and set $V$ of nodes, let $W$ be a positional DAG in $G$, and
let $n=|V|$ and $m=n-|W|$. Then
$G$ can be solved with $\mathcal{O}(m^{\log k+1})$ computations of a DAG attractor;
if $k+1<\log m$, then $G$ can be solved with a number of DAG attractor computations that is polynomial in $m$ (specifically: in $\mathcal{O}(m^5)$).
\end{lemma}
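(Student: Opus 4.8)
The plan is to write the winning region of $G$ as an alternating nested fixpoint over the powerset lattice $\mathcal{P}(V)$ of game nodes, and then to \emph{accelerate} this computation by eliminating all internal DAG nodes from its domain, so that the iteration instead ranges over $\mathcal{P}(V')$ with the DAG attractor function $\mathsf{DAttr}^W_\exists$ acting as the one-step operator. First I would recall the classical Emerson--Jutla characterization of player $\exists$'s winning region in a parity game with priorities $0,\ldots,k$ as
\[
W_\exists=\eta_k Z_k.\,\eta_{k-1}Z_{k-1}.\,\cdots\,\eta_0 Z_0.\; f(Z_0,\ldots,Z_k),
\]
where $\eta_i=\nu$ for even $i$ and $\eta_i=\mu$ for odd $i$, and $f$ is the usual controllable-predecessor operator returning the nodes from which player $\exists$ can force a single move into the component $Z_p$ indexed by that move's priority $p$. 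Correctness of this characterization is standard, so the whole task reduces to replacing $f$ over $\mathcal{P}(V)$ by $\mathsf{DAttr}^W_\exists$ over $\mathcal{P}(V')$ without altering the computed winning region on the real nodes.

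The core step is to justify this replacement. Since $W$ is a DAG, no play stays in $W$ forever, so every partial play through $W$ is finite, and its effect on the fixpoint is determined entirely by the real node at which it exits $W$ together with the maximal priority $p$ seen along the way---which is exactly the data aggregated by $\mathsf{DAttr}^W_\exists(\overline{V})$, where a real node is attracted to $\overline{V}=(V_0,\ldots,V_k)$ iff player $\exists$ can force every exit from $W$ into the component $V_p$ for the maximal priority $p$ met en route. I would then argue, by induction on the (finite) depth of $W$ and using positionality---which guarantees that each existential DAG node is reachable from a unique real node, so the multi-step strategy through $W$ can be read off positionally---that telescoping the finitely many backward applications of $f$ along each DAG path collapses to a single application of $\mathsf{DAttr}^W_\exists$. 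Hence, phase by phase, the nested fixpoint over $\mathcal{P}(V')$ with operator $\mathsf{DAttr}^W_\exists$ computes precisely $W_\exists\cap V'$, and the full winning region is recovered by one concluding DAG attraction step. This correctness argument, and in particular the careful bookkeeping of how priorities encountered inside $W$ select the index $p$ of the attracted component, is where I expect the main obstacle to lie.

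For the complexity bound I would invoke the fixpoint-acceleration machinery of \cite{Hausmann24}, which evaluates such a nested fixpoint using a quasipolynomial number of one-step applications by traversing a universal tree (equivalently, via succinct progress measures). As the accelerated fixpoint lives over $\mathcal{P}(V')$ with $|V'|=m$ and has alternation depth $k+1$, the number of operator evaluations---each of which is exactly one DAG attractor computation---is bounded by the corresponding universal-tree size $\mathcal{O}(m^{\log k+1})$. For the low-priority regime I would use the refined progress-measure analysis in the spirit of Jurdzi\'nski--Lazi\'c: when $k+1<\log m$, the height of the relevant universal tree is small enough relative to the number $m$ of leaves that its size, and with it the number of DAG attractor computations, collapses to a fixed polynomial, which a direct count bounds by $\mathcal{O}(m^5)$. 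This is precisely the regime in which solving certificate games via DAG attraction becomes efficient.
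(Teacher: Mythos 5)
The paper does not prove this lemma itself: it is imported verbatim from \cite{Hausmann24arxiv}, and the only in-paper justification is the informal discussion preceding the definition of DAG attractors. Your reconstruction follows exactly that intended route --- express the winning region as a nested $\mu$/$\nu$ fixpoint of the controllable-predecessor operator, collapse the finitely many steps through the cycle-free part $W$ into a single application of $\mathsf{DAttr}^W_\exists$ over the real nodes (using positionality of the DAG to make the multi-step strategy well-defined), and then apply quasipolynomial fixpoint acceleration over the lattice $\mathcal{P}(V')$ of size $m$ to obtain the $\mathcal{O}(m^{\log k+1})$ and, for $k+1<\log m$, the $\mathcal{O}(m^5)$ bounds on the number of operator evaluations. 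This is essentially the same approach as the cited source, so there is nothing to flag beyond the fact that the full bookkeeping of priorities through $W$ lives in \cite{Hausmann24arxiv} rather than in this paper.
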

We always have $t_{\mathsf{Attr}^W_\exists}\leq |E|$, using a least fixpoint computation to check for alternating reachability, thereby possibly exploring all DAG edges of the game.
However, in the case that $m<\log n$ and $t_{\mathsf{Attr}^W_\exists}\in \mathcal{O}(\log n)$ (that is, when most of the game nodes are part of a DAG, and DAG attractability can be decided without exploring most of the DAG nodes), Lemma~\ref{lem:pgtrick} enables exponentially faster game solving.

\subsection{Checking certificate existence efficiently}

The parity game $P(C(G))$ that is obtained by applying the LAR construction from Subsection~\ref{subsec:lar} to
the certificate game $C(G)$ has $2^{O (n\cdot\max(d,k)\cdot\log n))}\cdot |S|!$ nodes,
but only $|S|!n$ of these are real nodes: all certificate nodes are internal nodes in a gadget that has a DAG structure.
In this section, we show that DAG attractors in $P(C(G))$ can be computed
efficiently relying on non-emptiness checking of suitable Emerson-Lei automata.

In $P(C(G))$, the DAG attractor to a tuple $\overline{V}=(V_1,\ldots,V_{2|C|+1})$ consists of the nodes
$(v,\pi)$ such that there is a valid certificate starting at $v$ such that
all exits points of the cerficate are safe in the sense of Remark~\ref{rem:safe}, that
is, exiting the certificate with fingerprint $i$ is only possible to nodes $w\in V_i$; we refer to such certificates as \emph{valid and safe}.

Next we show how the existence of valid and safe certificates can efficiently
be checked by using non-emptiness checking for Emerson-Lei automata to find valid and safe
certificate loops that reside over single sets $V_i$ (as the fingerprint does not increase within certificate loops, by the construction of certificates as in the proof of Lemma~\ref{lem:certificateExistence}), and then using a reachability analysis that keeps track of fingerprints to compute safe stems leading (with fingerprint $i$) to some loop over $V_i$.
In more detail, we check for the existence of valid and safe certificates as follows, fixing a permutation $\pi$ and
a tuple $(V_1,\ldots,V_{2|C|+1})$, where each $V_i$ is a set of real nodes in $P(C(G))$.
\begin{itemize}
\item \emph{Make the fingerprints explicit.} Define the graph $M=(V\times [{2|C|+1}],R)$ as follows. Vertices
of the graph are pairs $(v,i)$ consisting of a game node $v\in V$ and a priority $i\in [{2|C|+1}]$, intuitively encoding the largest priority that has been visited since a
DAG has been entered; edges in this graph correspond to game moves but also update the priority value according to visited priorities, that is, $R$ contains exactly the edges
$((v,i),(w,j))$ such that $w\in E(v)$ and $j=\max(i,\Omega_\pi(v,w))$,
where $\Omega_\pi(v,w)$ denotes priority associated to seeing the set $\gamma_C(v,w)$ of
colors on memory $\pi$ (cf. Definition~\ref{defn:LAR}).
\item \emph{Remove unsafe vertices.} A vertex $(u,i)$ such that $(u,\pi@i)$ is contained neither in $V_{2i}$ nor in $V_{2i+1}$ is \emph{unsafe}. Remove from $M$ all vertices $(v,i)$ such that $v\in V_\forall$ and there is $w\in E(v)$ such that $(w,i)$ is unsafe;
these are vertices from where player $\forall$ can access an unsafe exit point of the DAG.
\item \emph{Find safe and valid certificate loops:} Define, for all $i\in[2|C|+1]$, a nondeterministic Emerson-Lei automaton
$A_i=(Q_i,\delta,\alpha_S\land\alpha_W)$ (with singleton alphabet $\{*\}$) by putting
$Q_i=\{(v,i)\mid (v,i)\text{ still exists in }M\}$ and
$\delta((v,i),*)=\{(w,i)\mid w\in E(v)\text{ and } \Omega_\pi(v,w)\leq i \}$
for $(v,i)\in Q_i$.
Compute the non-emptiness region of $A_i$ and call it $N_i$.
\item \emph{Find safe stems:} Remove from $M$ all vertices $(v,0)$ for which there is no $j$ such
that some vertex from $N_j$ is reachable (in $M$) from $(v,0)$.
\end{itemize}
For all vertices $(v,0)$ that are contained in $M$ after this procedure terminates,
there is a safe stem $w$ leading (with maximal priority $j$) to some safe
and valid loop $u$ over $N_j$; $(w,u)$ is a valid certificate for $v$. We state the correctness of the described procedure as follows.

\begin{lemma}\label{lem:attrAut}
Given subsets $\overline{V}=V_0,\ldots V_{2k}$ of the real nodes in $P(C(G))$ and a real node $(v,\pi)$
in $P(C(G))$, we have that player $\exists$ can attract to $\overline{V}$ from $(v,\pi)$ within
$\mathsf{Cert}\times\Pi(S)$ if and only if
the set $M$ contains the pair $(v,0)$ after execution the above procedure
(for parameters $\overline{V}$ and $(v,\pi)$).
\end{lemma}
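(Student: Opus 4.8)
The plan is to prove the biconditional by inserting an intermediate characterization: player $\exists$ can attract to $\overline{V}$ from $(v,\pi)$ within the gadget DAG if and only if there exists a certificate $c=(w,u)$ for $v$ that is \emph{valid and safe}, and then to argue that the procedure retains $(v,0)$ in $M$ exactly when such a certificate exists. Here I call $c$ \emph{safe} (for $\pi$ and $\overline{V}$) if every exit move of $c$ --- a move leaving the gadget from a universal node $v_i$ of $c$ to a successor $w\in E(v_i)$ --- reaches a real node $(w,\pi@p)$ that lies in the safe set indexed by the priority of that exit, where $p$ is the rightmost position of $\pi$ holding a color from $\gamma_S(v_0\ldots v_i w)$. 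Since at $(v,\pi)$ the only genuine choice is player $\exists$'s single choice of a certificate, and every subsequent move up to the first real node is either forced or controlled by player $\forall$, this characterization is exactly the content of Remark~\ref{rem:safe}; certificates with no universal node correspond to the trivial-loop win for player $\exists$ and are vacuously safe.

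The technical heart is to relate the priorities seen along a gadget traversal in $P(C(G))$, where the permutation evolves under the LAR update of Definition~\ref{defn:LAR}, to the quantities computed by the graph $M$, which uses the \emph{fixed} permutation $\pi$ and the running maximum $j=\max(i,\Omega_\pi(v,w))$. First I would establish the key observation that, along any single gadget path, the maximal priority encountered under the evolving permutation equals the priority determined by $p(\tau)$, i.e.\ by the rightmost position in $\pi$ of the \emph{cumulative} $S$-fingerprint of the path. This holds because LAR only moves colors towards the front, so the rightmost position in $\pi$ of the accumulated colors is attained already at $\pi$ itself, and the maximum of the per-edge rightmost positions equals the rightmost position of the union of the edge fingerprints. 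Consequently the vertex $(u,i)$ of $M$ faithfully records ``node $u$ reached with accumulated priority level $i$'', and an exit from $(u,i)$ at a universal node matches an exit in $P(C(G))$ with exactly priority $2i$ or $2i+1$; the \emph{remove unsafe vertices} step therefore deletes precisely those positions from which player $\forall$ could force an unsafe exit, so the surviving part of $M$ describes exactly the safe traversals.

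It then remains to show that the loop- and stem-analyses detect validity. Using the structural property from Lemma~\ref{lem:certificateExistence} that the $S$-fingerprint does not increase along a certificate loop, a valid and safe loop resides entirely over vertices of a single level $i$; hence such loops are exactly the accepting runs of the Emerson-Lei automaton $A_i$, whose transitions are restricted to surviving vertices with $\Omega_\pi(v,w)\le i$ and whose acceptance condition is $\alpha_S\wedge\alpha_W$. Thus $N_i$ is the set of nodes from which a valid and safe loop of level $i$ starts, and the final \emph{safe stems} step keeps $(v,0)$ precisely when some $N_j$ is reachable in $M$ from $(v,0)$ along a safe path whose maximal level is $j$ --- equivalently, when $v$ admits a safe stem of maximal level $j$ feeding a valid and safe loop of level $j$, which is exactly a valid and safe certificate for $v$. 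Chaining the equivalences yields the lemma: $(v,0)$ survives iff a valid and safe certificate exists iff player $\exists$ attracts. The hard part will be the priority-correspondence argument of the previous paragraph, since it must simultaneously reconcile the evolving permutation of the real game with the fixed-$\pi$ bookkeeping of $M$ and match the level index $i$ against the parity priorities $2i,2i+1$ and the indices of the safe sets $\overline{V}$; the automaton non-emptiness and reachability parts are then routine.
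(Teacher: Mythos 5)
Your proposal is correct and follows essentially the same route as the paper's proof: it factors the equivalence through the existence of a valid and safe certificate, matches certificate loops with accepting runs of the automata $A_i$ (using the fact that the $S$-fingerprint is already saturated on the loop, as guaranteed by the construction in Lemma~\ref{lem:certificateExistence}), matches stems with safe paths in $M$, and justifies the fixed-$\pi$ bookkeeping by the observation that the maximum of the per-edge priorities equals the priority of the cumulative fingerprint. The only remark is that the ``hard part'' you anticipate is milder than you fear, since in $P(C(G))$ each gadget traversal carries colors on a single edge $c\to(c,i,v')$ labelled with the whole path fingerprint, so the permutation is updated only once per gadget and your cumulative-fingerprint identity is exactly what closes the gap.
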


\begin{corollary}\label{cor:attrComp}
DAG attractors in $P(C(G))$ can be computed in time $\mathcal{O}(d!dt)$ where $t$ denotes the time it takes
to check $\alpha_S\wedge\alpha_W$ automata of size $n$ for non-emptiness.
\end{corollary}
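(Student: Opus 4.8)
The plan is to read off the runtime directly from the procedure preceding Lemma~\ref{lem:attrAut}, whose \emph{correctness} is already guaranteed by that lemma; thus only the complexity analysis remains. The key observation is that computing $\mathsf{DAttr}^W_\exists(\overline{V})$ for a fixed input tuple $\overline{V}$ amounts to deciding, for every real node $(v,\pi)$ of $P(C(G))$, whether player $\exists$ can attract to $\overline{V}$ from $(v,\pi)$ within $\mathsf{Cert}\times\Pi(S)$. First I would note that the whole procedure is parameterized by a single permutation $\pi\in\Pi(S)$, and that for a fixed $\pi$ one run of it classifies \emph{all} nodes $(v,\pi)$ simultaneously: the final step removes exactly those pairs $(v,0)$ from which no valid and safe loop is reachable in $M$, so the surviving pairs $(v,0)$ are precisely the starting nodes $v$ for which $(v,\pi)$ lies in the attractor (by Lemma~\ref{lem:attrAut}). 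Hence it suffices to run the procedure once per permutation, and there are $|S|!=d!$ permutations.

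Next I would bound the cost of a single run. I would first check that the automata $A_i$ are small: since the index $i$ is frozen throughout $A_i$, its state set $Q_i=\{(v,i)\mid (v,i)\text{ survives in }M\}$ contains at most one copy of each $v\in V$, so $A_i$ has at most $n$ states and carries the acceptance condition $\alpha_S\wedge\alpha_W$. The index $i$ ranges over the $2|S|+1\in\mathcal{O}(d)$ priority levels of $P(C(G))$, so there are $\mathcal{O}(d)$ such automata, and computing their non-emptiness regions $N_i$ costs $\mathcal{O}(d)$ non-emptiness checks, each of cost $t$ by the definition of $t$; this yields $\mathcal{O}(dt)$ per permutation. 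The remaining operations, namely building the fingerprint graph $M$ (which has $\mathcal{O}(nd)$ vertices), removing unsafe vertices, and the stem-reachability step, are plain reachability computations over $M$ and are therefore polynomial in its size and dominated by the automata checks, since any non-emptiness check for an $n$-state $\alpha_S\wedge\alpha_W$ automaton already subsumes a reachability computation.

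Multiplying the per-permutation cost $\mathcal{O}(dt)$ by the $d!$ permutations then gives the claimed bound $\mathcal{O}(d!\,d\,t)$.

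The main obstacle I expect is the first step: arguing carefully that one run of the procedure per permutation suffices, that is, that we need not pay an additional factor of $n$ to classify each starting node separately. This hinges on the fact that the procedure computes the non-emptiness regions $N_i$ globally and then performs a single backward reachability to detect all safe stems at once, so that every $(v,0)$ is decided in parallel. A secondary point to get right is the size bound on $A_i$: because the fingerprint cannot increase along a certificate loop and such a loop resides over a single set $V_i$ (by the construction of certificates in the proof of Lemma~\ref{lem:certificateExistence}), the automaton for level $i$ genuinely collapses onto the original node set $V$, keeping its size at $n$ rather than blowing up by the number of priority levels.
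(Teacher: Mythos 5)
Your proposal is correct and follows essentially the same route as the paper's (very terse) proof: run the procedure once per permutation $\pi\in\Pi(S)$, giving a factor $d!$, and bound a single run by $\mathcal{O}(dt)$ via the $\mathcal{O}(d)$ non-emptiness checks on the $n$-state automata $A_i$, with the remaining reachability work on $M$ absorbed into that bound. Your additional justifications (that one run classifies all starting nodes $(v,0)$ simultaneously, and that each $A_i$ has at most $n$ states because the index $i$ is frozen) are exactly the points the paper leaves implicit.
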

\begin{proof}
In order to compute a DAG attractor in $P(C(G))$, it suffices to execute the above
procedure once for each $\pi\in\Pi(S)$, that is, $d!$ many times; a single execution of the procedure can be implemented in time $\mathcal{O}(dt)$.
\end{proof}
\subsection{Faster solution of obliging games}

We are now ready to state the main result of this section.

\begin{theorem}\label{thm:certSolve}
Certificate games for objectives $\alpha_S$ and $\alpha_W$ with $n$ nodes and $d:=|S|$
colors for the strong objective can be solved
in time $\mathcal{O}((d! n)^{5}d!dt)$, where $t$ denotes the time it takes
to check Emerson-Lei automata of size $n$
and with acceptance condition $\alpha_S\wedge\alpha_W$  for non-emptiness.
If $\alpha_S$ is a parity objective, then the runtime bound
is $\mathcal{O}(n^{\log (2d+1)}dt)$.
\end{theorem}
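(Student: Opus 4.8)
The plan is to feed the certificate game $C(G)$ through the three tools assembled in this section and then multiply the resulting size and time bounds. Concretely, I would first turn $C(G)$ into an equivalent parity game, then recognise the certificate gadgets as a DAG, and finally solve that parity game with the fixpoint-acceleration scheme of Lemma~\ref{lem:pgtrick}, discharging every DAG-attractor step by the automata-based procedure of Corollary~\ref{cor:attrComp}. The overall runtime is then (number of DAG-attractor computations) $\times$ (cost of one such computation), so the two cases of the theorem differ only in how these two factors instantiate.

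For the general Emerson-Lei case I would apply the LAR construction of Subsection~\ref{subsec:lar} to obtain $P(C(G))$, which by Lemma~\ref{lem:LAR} has the same winner at every node as $C(G)$. As recorded above, $P(C(G))$ has $m := n\cdot d!$ real nodes and $2d+1$ priorities, and its inner gadget nodes form a DAG $W$ since gadgets contain no non-trivial cycle. The observation that makes Lemma~\ref{lem:pgtrick} applicable is that every gadget node is owned by player $\forall$: in $C(G)$ we have $N_\exists=V$, so the nodes $c$ and $(c,i,v)$ are universal, and the LAR construction preserves ownership. Hence $W\cap V_\exists=\emptyset$, and the positional-DAG requirement holds vacuously. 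Lemma~\ref{lem:pgtrick} then solves $P(C(G))$ with $\mathcal{O}((d!n)^{5})$ DAG-attractor computations; a small case analysis on $d$ confirms that both the polynomial regime ($2d+1<\log m$) and the general bound $\mathcal{O}(m^{\log(2d)+1})$ (for the few remaining, small values of $d$) stay within $\mathcal{O}(m^{5})$. By Corollary~\ref{cor:attrComp} each such computation takes time $\mathcal{O}(d!\,d\,t)$, namely one execution per permutation in $\Pi(S)$ of a procedure built on non-emptiness checking of $\alpha_S\wedge\alpha_W$-automata of size $n$. Multiplying the two factors yields the claimed $\mathcal{O}((d!n)^{5}\,d!\,d\,t)$.

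For the case that $\alpha_S$ is a parity objective I would avoid the LAR step entirely. Here the certificate-game objective $\beta=\mathsf{Inf}(c_\top)\vee\varphi_S$ is already equivalent to a parity condition, obtained by collapsing each $S$-fingerprint on a gadget edge to its maximal priority and treating $c_\top$ as the top even priority; thus $C(G)$ is directly a parity game with only $\mathcal{O}(d)$ priorities, the real nodes are just $V$, and $m=n$. Applying Lemma~\ref{lem:pgtrick} with these priorities gives $\mathcal{O}(n^{\log(2d+1)})$ DAG-attractor computations, and since there are now no permutations to iterate over, the per-computation cost drops to $\mathcal{O}(d\,t)$ for a single run of the (permutation-free analogue of the) procedure behind Corollary~\ref{cor:attrComp} on the fingerprint-tracking graph. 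This gives the bound $\mathcal{O}(n^{\log(2d+1)}\,d\,t)$.

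I expect the difficulty to lie not in any single deep idea but in the interface bookkeeping. The two points I would treat most carefully are: (i) that the multi-step attraction consumed by the nested fixpoint of Lemma~\ref{lem:pgtrick} is \emph{exactly} the ``valid and safe certificate existence'' computed by Lemma~\ref{lem:attrAut} and Corollary~\ref{cor:attrComp}, so that the acceleration scheme and the automata procedure really plug together at the gadget boundary; and (ii) reconciling the two regimes of Lemma~\ref{lem:pgtrick} so that the attractor count is uniformly $\mathcal{O}(m^{5})$ in the general case and $\mathcal{O}(n^{\log(2d+1)})$ in the parity case. Everything else is the routine multiplication of the already-established node, priority, and timing bounds.
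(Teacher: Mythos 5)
Your proposal is correct and follows essentially the same route as the paper's proof: reduce $C(G)$ to $P(C(G))$ via the lazy LAR construction (Lemma~\ref{lem:LAR}), solve the resulting parity game with $\mathcal{O}((d!n)^5)$ DAG-attractor computations using Lemma~\ref{lem:pgtrick}, discharge each attractor computation in time $\mathcal{O}(d!dt)$ via Lemma~\ref{lem:attrAut}/Corollary~\ref{cor:attrComp}, and skip the LAR step entirely when $\alpha_S$ is a parity objective. Your additional observations (that the positionality requirement on the DAG holds vacuously because all gadget-internal nodes are universal, and that the parity case collapses fingerprints to maximal priorities with $c_\top$ as top even priority) are correct elaborations of points the paper leaves implicit.
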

\begin{proof}
By Lemma~\ref{lem:LAR}, it suffices to solve the paritized version
$P(C(G))$ of $C(G)$.
By Lemma~\ref{lem:attrAut}, computing DAG attractors in $P(C(G))$
can be done in time $\mathcal{O}(d!dt)$.
As we have $d<\log(d!\cdot n)$, $P(C(G))$ can be solved with
$(d!n)^5$ computations of a DAG attractor by Lemma~\ref{lem:pgtrick}.
If $\alpha_S$ is a parity objective, then the LAR construction is not necessary as $C(G)$ already is a parity game; DAG attractors then can be computed in time $\mathcal{O}(dt)$  and
$C(G)$ can be solved with $\mathcal{O}(n^{\log (2d+1)})$ computations of a DAG attractor.
\end{proof}

We collect results on the complexity of non-emptiness checking of Emerson-Lei automata with
acceptance condition $\alpha_S\wedge\alpha_W$ for specific $\alpha_S$ and $\alpha_W$.
It is known (cf. Table 2. in \cite{BaierBD00S19}) that while the problem is in \textsc{P} for most frequently used objectives (subsuming automata with generalized B\"uchi, Rabin or Streett conditions, with linear or quadratic dependence on the number of colors), it
is \textsc{NP}-complete for Emerson-Lei conditions.
For combinations of such objectives, the problem remains in \textsc{P} unless
one of the objectives is of type Emerson-Lei:
\begin{lemma}\label{lem:SANDRemptiness}
The Rabin($d$)$\land$Streett($k$) non-emptiness problem is in P (more precisely: in $\mathcal{O}(mdk^2)$ for automata with $m$ edges).
\end{lemma}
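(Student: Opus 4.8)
The plan is to reduce non-emptiness of an $\alpha_S\land\alpha_W$ automaton (with $\alpha_S$ a Rabin and $\alpha_W$ a Streett condition) to the search for a suitable strongly connected subgraph, combining the standard existential sweep over Rabin pairs with the recursive SCC-decomposition test for Streett conditions. Write the Rabin condition as $\bigvee_{i=1}^{d}(\Fin\, e_i\land\Inf\, f_i)$ and the Streett condition as $\bigwedge_{j=1}^{k}(\Inf\, g_j\to\Inf\, h_j)$, where $e_i,f_i,g_j,h_j$ are colors. Since the alphabet is a single letter, an accepting run exists iff there is a reachable, non-trivial strongly connected subgraph $C$ whose edge-color fingerprint $\gamma(C)$ satisfies $\varphi_S\land\varphi_W$: any such $C$ can be traversed so that exactly the colors in $\gamma(C)$ occur infinitely often, and conversely the loop of any accepting lasso induces such a $C$.

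First I would restrict the automaton to the states reachable from $q_0$ (linear time), so that every subgraph found subsequently is automatically reachable. The main loop iterates over the $d$ Rabin pairs. For pair $i$, I would delete every edge carrying color $e_i$, obtaining the subgraph $\Gamma_i$; every cycle of $\Gamma_i$ then satisfies $\Fin\, e_i$ automatically. Within $\Gamma_i$ I would search for a non-trivial strongly connected subgraph that (a) satisfies the Streett condition and (b) contains an $f_i$-colored edge. Requirement (a) is handled by the recursive SCC-decomposition procedure for Streett: compute the SCCs of $\Gamma_i$; for each SCC $S$, determine the set of \emph{violated} pairs $\{j : g_j\in\gamma(S),\ h_j\notin\gamma(S)\}$; if this set is empty, $S$ is a good Streett subgraph and I check it for requirement (b); otherwise I delete from $S$ all edges carrying a color $g_j$ of a violated pair and recurse on the remainder. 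The algorithm answers ``non-empty'' as soon as some branch yields a good Streett subgraph containing an $f_i$-edge, and ``empty'' if no Rabin pair succeeds.

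The correctness of the inner search is the technical heart, and the step I expect to require the most care. For soundness, a good Streett subgraph $C\subseteq\Gamma_i$ containing an $f_i$-edge is reachable and can be traversed to realise $\mathsf{Inf}=\gamma(C)$, which avoids $e_i$, contains $f_i$, and satisfies every Streett implication, so $\varphi_S\land\varphi_W$ holds. For completeness I argue that the recursive deletions never destroy a genuine witness. Suppose a lasso with loop $u$ satisfies the whole condition via Rabin pair $i$; then $u$ lies in $\Gamma_i$ (as it avoids $e_i$), contains $f_i$, and satisfies Streett, so for every $j$ with $h_j\notin\gamma(u)$ we also have $g_j\notin\gamma(u)$. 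Whenever the procedure, processing an SCC $S\supseteq u$, deletes the $g_j$-edges of a pair $j$ violated in $S$, we have $h_j\notin\gamma(S)\supseteq\gamma(u)$, hence $h_j\notin\gamma(u)$ and therefore $g_j\notin\gamma(u)$; thus no edge of $u$ is deleted. By induction $u$ survives all recursion and ends inside a good Streett subgraph containing its $f_i$-edge, so the search detects it. This monotonicity—that a Streett-satisfying loop contains no unmatched $g_j$-edge and is hence immune to the removals—is exactly what makes the greedy decomposition complete.

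For the complexity, fix a Rabin pair $i$. Each recursion level recomputes SCCs in $\mathcal{O}(m)$ total, and computing the violated-pair sets together with the deletions costs $\mathcal{O}(mk)$ per level; since each level permanently removes all $g_j$-edges of at least one newly violated pair, no pair can be violated twice along a branch and the recursion depth is at most $k$, giving $\mathcal{O}(mk^2)$ for the Streett search over $\Gamma_i$, with the $f_i$-check subsumed. Constructing $\Gamma_i$ and the initial reachability pruning are $\mathcal{O}(m)$. Summing over the $d$ Rabin pairs yields the claimed bound $\mathcal{O}(m\,d\,k^{2})$.
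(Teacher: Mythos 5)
Your proof is correct and follows essentially the same route as the paper's: sweep existentially over the $d$ Rabin pairs and reduce each case to a Streett emptiness check, giving $d$ checks at cost $\mathcal{O}(mk^2)$ each. The only differences are presentational --- the paper encodes $\mathsf{Fin}\,e_i$ and $\mathsf{Inf}\,f_i$ as two extra Streett pairs $(E_i,\bot)$ and $(\top,F_i)$ and cites the known Streett emptiness bound as a black box, whereas you handle $\mathsf{Fin}\,e_i$ by edge deletion and spell out (correctly) the recursive SCC-decomposition algorithm and its monotonicity argument yourself.
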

\begin{proof}
Let $A$ be an Emerson-Lei automaton with $n$ nodes, $m$ edges and acceptance condition
Streett$(d)\land$ Rabin($k$). We check $A$ for non-emptiness as follows.
Let the Rabin($k$) condition be encoded by $k$ Rabin pairs $(E_i,F_i)$.
For each $1\leq i\leq k$, check the same automaton
but with acceptance condition Streett$(d)\land \mathsf{Inf}(F_i)\land
\mathsf{Fin}(E_i)$ for emptiness; call this automaton $A_i$.
The acceptance condition of $A_i$ can be treated
as a Streett($d+2$) condition where the two additional Streett pairs
are $(\top,F_i)$ and $(E_i,\bot)$. As a state in $A$ is non-empty if and only
if there is some $i$ such the state is non-empty in $A_i$, it suffices to check
the $k$ many Streett($d+2$) automata for emptiness. The claim follows
from the bound on emptiness checking for Streett automata given in~\cite{BaierBD00S19}.
\end{proof}

Using the equivalence of certificate games
and obliging games (Theorem~\ref{theorem:gameCorrectness}),  Theorem~\ref{thm:certSolve} together with the described complexities of emptiness checking yields improved upper bounds on the runtime complexity of solving obliging games, shown in the table below; we let $n$ denote the number of nodes;
$m$ the number of edges; $b$ the size of a nondeterministic B\"uchi automaton accepting $\alpha_W$; and $t$ the
time required for emptiness checking an Emerson-Lei automaton of size
$n$ with acceptance condition $\alpha_S\wedge\alpha_W$; finally we let
$o$ abbreviate $\max(|W|,|S|)$.
Recall that the approach from~\cite{ChatterjeeHL10}
reduces an obliging game to a standard game in which the  objective $\alpha'$ is the conjunction of $\alpha_S$ with a B\"uchi objective, incurring blowup $b$ on the arena size. This game then can be transformed to a parity game
$\mathcal{G}$ (with parameters $v,e$ and $r$) incurring additional blowup for the LAR transformation of $\alpha'$ to
a parity objective. Notice the definition of $e$ (an underapproximation of the number of edges in
$\mathcal{G}$) in column 4 and its usage in column 5.

For instance for an obliging game with objectives $\alpha_S=$
Streett($d$) and $\alpha_W=$generalized B\"uchi($d$) consisting of the
Streett requests, the approach from~\cite{ChatterjeeHL10} reduces the
game to a parity game $\mathcal{G}$ with $d!nd$ nodes, at least $d!md$
edges, and $2d$ priorities; such a game can be solved
in time $\mathcal{O}(d!md^6(d!n)^5)$. In contrast, emptiness checking for $\alpha_S\wedge\alpha_W$-automata is just emptiness checking for generalized B\"uchi automata so that our new
algorithm solves such games in time $\mathcal{O}(d!md^2(d!n)^5 )$.
For objectives $\alpha_S=$ Rabin($d$) and $\alpha_W=$ Streett($k$), the approach from~\cite{ChatterjeeHL10} has time
complexity $\mathcal{O}(m(d!k2^k)^6n^5)$ while our algorithm has time complexity just $\mathcal{O}(m(d!)^6 n^5do^3)$; this is due to the fact that B\"uchi automata that recognize Streett objectives are of exponential size (as they have to guess a set of Streett pairs and then verify their satisfaction), while emptiness checking for Streett($d$)$\land$ Rabin($k$)-automata can be done in time cubic in $o$.
\begin{table}
\begin{center}
\begin{footnotesize}
\begin{tabular}{| l | l || c | l | l || c|l|}
\hline
type of $\alpha_S$ & type of $\alpha_W$ &
$b$& $|\mathcal{G}|(v,e,r)$~\cite{ChatterjeeHL10}&
\quad time \cite{ChatterjeeHL10} & $t$~\cite{BaierBD00S19} & time here\\
\hline
\hline
\multirow{3}{*}{parity($d$)}
 & Rabin($k$) & $k+1$ & \multirow{3}{*}{$dnb,dmb,d$} &
 \multirow{3}{*}{$\mathcal{O}(e(dnb)^{\log d})$} & $\mathcal{O}(mo^2)$ &
 \multirow{3}{*}{$\mathcal{O}(n^{\log d+1}dt)$}\\\cline{2-3}\cline{6-6}
 & Streett($k$) & $2^kk$ &  & & $\mathcal{O}(mo^2)$ &  \\\cline{2-3}\cline{6-6}
 & EL($k$) & $2^kk$  & & & $\mathcal{O}(mo^2 2^{o})$ & \\
\hline
\hline
Rabin($d$) & Rabin($k$) & $k+1$ & \multirow{3}{*}{$d!nb, d!mb, 2d$} & \multirow{3}{*}{$\mathcal{O}(e(d!nb)^5)$} & $\mathcal{O}(mo^3)$ & \multirow{3}{*}{$\mathcal{O}((d!n)^5 d!dt)$} \\\cline{2-3}\cline{6-6}
or & Streett($k$)& $2^k k$ &  & & $\mathcal{O}(mo^3)$ & \\\cline{2-3}\cline{6-6}
Streett($d$) & EL($k$)& $2^k k$ & & & $\mathcal{O}(mo^2 2^{o})$ & \\
\hline
\hline
EL($d$) & EL($k$) & $2^k k$ & $d!nb, d!mb, 2d$ & $\mathcal{O}(e(d!nb)^5)$ & $\mathcal{O}(mo^2 2^{o})$ & $\mathcal{O}((d!n)^5 d! dt)$ \\
\hline
\hline
Streett($d$) & g.B\"uchi($d$) & $d$ & $d!nb, d!mb, 2d$ & $\mathcal{O}(e(d!nb)^5)$ & $\mathcal{O}(md)$ & $\mathcal{O}((d!n)^5d! dt)$ \\
\hline
\hline
GR[1]($d,k$) & g.B\"uchi($d$) & $d$ & $dknb, dkmb, 3$ & $\mathcal{O}(e(dknb)^2)$ & $\mathcal{O}(md)$ & $\mathcal{O}((dk)^3n^2 dt)$ \\
\hline

\end{tabular}
\end{footnotesize}
\end{center}
\caption{Comparison of runtime complexities for solving obliging games of various types.}
\end{table}
\vspace{-10pt}

\section{Conclusion}
We propose a new angle of looking at the solution of obliging games.
In contrast to previous approaches that have been based on single-step game reasoning, our method
requires players to make promises about their long-term future behavior,
which we formalize using the concept of certificates (or, more generally, witnesses). This new approach to
obliging games enables us to not only show their determinacy (with strategies that contain additional information), but
to also significantly improve previously existing upper bounds both on the size of graciously winning strategies, and on the worst-case runtime complexity of the solution of such games.

Technically, we use our new approach to show that the strategy sizes for Emerson-Lei obliging games with strong objective $\alpha_S$ and
weak objective $\alpha_W$ are linear in the number $|W|$ of colors used in the weak objective; we obtain a similar polynomial dependence on $|W|$ for the runtime of solving obliging games, however with the important exception of the case where $\alpha_W$
is a full Emerson-Lei objective that cannot be expressed by a simpler (e.g. Rabin or Streett) objective.
In previous approaches, those dependencies on $|W|$ have been, in general, exponential.
Thereby we show that the strategy complexity of $\alpha_S$ / $\alpha_W$ obliging games
is not significantly higher than that of standard games with objective just $\alpha_S$, and that in many cases, this holds for the runtime complexity as well.

We leave the existence of canonical certificates (cf. Remark~\ref{rem:canon}) as an open question for future work;
such canonical certificates would allow for the extraction of yet smaller graciously winning strategies for obliging games.

We solve certificate games by using the LAR reduction to obtain equivalent parity games and then solving these parity games by fixpoint acceleration, computing nested fixpoints of a function that checks for certificate existence.
We conjecture that it is possible to directly compute the more involved nested fixpoint associated to Emerson-Lei objectives (as given in~\cite{HausmannLP24}) over the original game arena; this would avoid the LAR reduction step in the solution process. 
\bibliography{lib}

\clearpage
\section{Appendix}

\noindent \textbf{Full proof of Lemma~\ref{lemma:gracious games} (Correctness of witness games):}
\begin{proof}
\begin{itemize}
	\item[$\Rightarrow$]
Let $\sigma:V^*\cdot V_\exists \rightarrow V$ be a graciously winning
strategy for $v$.
Consider the strategy tree $T \subseteq V^+$, where $v\in T$ and for
every $wv'\in A^+$ we have that if $v'\in V_\exists$ then $wv'\sigma(wv')\in
T$ and if $v'\notin V_\exists$ then $wv'v'' \in T$ for every
$(v',v'')\in E$.
By definition of gracious strategies, for every $w\in T$ there is a
play $\pi(w)$ such that $w\pi(w)$ is compatible with $\sigma$ and
satisfies $\alpha_W$.
Furthermore, every play compatible with $\sigma$ (particularly, also
$w\pi(w)$) satisfies $\alpha_S$.
We now define a strategy $\sigma'$ for player $\exists$ in $W(G)$.
Consider a finite play $\pi'$ in $W(G)$ such that $\pi'$ ends in $v'\in
V'_\exists$. Then, $w=\seq_A(\pi')$ is a finite play in $A$.
We put $\sigma'(\pi')=\pi(w)$.
That is, $\sigma'$ moves from $v'$ to the witness $\pi(w)$ appearing
in $T$.

To see that $\sigma'$ is a winning strategy, consider an infinite play $\pi'$ in $W(G)$ that starts at $v$ and is compatible with $\sigma'$.
\begin{itemize}
\item[--]
If $\pi'\in V'^* \cdot (V'_\forall)^\omega$, the let $w$ be the longest
prefix of $\pi'$ ending in a vertex in $V'_\exists$ and let $v'$ be the
last vertex in $w$.
It follows that $\seq_A(\pi')=\seq_A(w)\cdot \sigma'(\seq_A(w)v')$.
By the choice of $\sigma'(\seq_A(w)v')$ we have that $\seq_A(\pi')$
satisfies $\alpha_W$. Furthermore, by $\seq_A(\pi')$ being a play
compatible with $\sigma$ we conclude that $\seq_A(\pi')$ also satisfies
$\alpha_S$. Thus, $\pi'\in \alpha'$.

\item[--] If $\pi'\in V'^*\cdot (V'_\exists \cdot (V'_\forall)^*)^\omega$, then
$\seq_A(\pi')$ is a
play compatible with $\sigma$.
We conclude that $\seq_A(\pi')$ satisfies $\alpha_S$ and hence
$\pi'\in\alpha'$.
\end{itemize}
\item[$\Leftarrow$]
Let $\sigma'$ be a winning strategy in $W(G)$ for $v$.
We construct a graciously winning strategy $\sigma$ in $G$
by using the witnesses.
Alongside $\sigma$ we keep track of all the histories ending also in
$V_\forall$ which have been handled.
Thus, we build $\sigma$ and $T\subseteq A^*$ simultaneously such that
$T$ is prefix closed.
Consider the vertex $v$. By definition, $\sigma'(v)=\pi$ such that
$\pi \in A^\omega$.
For every prefix $w'$ of $\pi$, add $w'$ to $T$. This clearly keeps $T$
prefix closed.
Furthermore,
consider a partition $\pi=w'v'v''w''$ of $\pi$, where $w'\in A^*$,
$v'\in V_\exists$, $v''\in V$, and $w''\in A^\omega$.
Then, define $\sigma(w'v')=v''$.
Notice that in the case that $w'=\epsilon$, we have $v'=v$.

Consider a finite play $wv$ compatible with $\sigma$ such that
$wv\notin T$ and every prefix of $wv$ is contained in $T$.
Then there is a play $\pi_{wv}$ in $W(G)$ such that
$\seq_A(\pi_{wv})=wv$ and $\pi_{wv}$ either ends in vertex
$v$ or can be extended by $v$.
Assume without loss of generality that $\pi_{wv}$ ends in vertex $v$.
Then, $\sigma'(\pi_{wv})=\pi'$ for some $\pi'\in A^\omega$.
As before, for every prefix $w'$ of $\pi'$ add $wvw'$ to $T$, which
again keeps $T$ prefix closed.
Furthermore, consider a partition $w'v'v''w''$ of $\pi'$ such that
$w'\in A^*$, $v'\in V_\exists$, $v''\in V$, and $w''\in A^\omega$.
Then, define $\sigma(wvw'v')=v''$.

We have to show that $\sigma$ is graciously winning:
\begin{itemize}
	\item
Consider an infinite play $\pi$ that is compatible with $\sigma$.
Either $\pi$ is obtained from a finite number of moves of $\sigma'$, in
which case the last move made by $\sigma'$ ensures that $\pi$ satisfies
$\alpha_S$.
Or $\pi$ is obtained from infinitely many moves of $\sigma'$, in which
case $\pi$ satisfies $\alpha_S$ again.
\item
Consider a prefix $p$ of a play $\pi$ that is compatible with $\sigma$.
By definition $p$ is obtained by some prefix $p'$ of $p$ and a
partition of the play $\pi'$ such that $(p',\pi')=\sigma'(p)$.
It follows that $p'\pi'$ extends $p$ and satisfies $\alpha_W$.
\end{itemize}
\end{itemize}
\end{proof}

\noindent \textbf{Full proof of Lemma~\ref{lem:certificateExistence} (Certificate existence):}
\begin{proof}
First we construct a suitable stem, intuitively by taking a finite prefix (of sufficient length) of $\pi$ and removing redundant loops from this prefix.
To this end,
we consider the finite sequence $\rho=(\pi_0,C_0)(\pi_1,C_1)\ldots(\pi_j,C_j)$, where $C_i$ is the
$S$-fingerprint of the finite play $\pi_0\pi_1\ldots \pi_i$ so that
$C_{i}\subseteq C_{i+1}$ for all $i$. The position $j$ is picked to be the least index such
that each node that is visited by $\pi$ from position $j$ on is visited infinitely often by
$\pi$, and such that $C_j$ contains the set of all $S$-colors that are visited by $\pi$ (including also the colors that are visited finitely often by $\pi$). We use this
position to separate the stem and the loop of the prospective certificate
as it is the first position at which the
fingerprint of $\pi$ has reached its full extent and from
which on only infinitely often occuring nodes are visited.

Now we repeatedly pick some two distinct indices $p$ and $q$ such that $(\pi_p,C_p)=(\pi_q,C_q)$ and remove
the subsequence $(\pi_{p+1},C_{p+1})\ldots(\pi_q,C_q)$ from $\rho$; such subsequences correspond
to loops in $V$ that are taken by $\pi$ but along which no new colors are added to the
current fingerprint $C_p=C_q$. As $\rho$ is finite to begin with, this process eventually terminates.
In the remaining sequence, each loop in $V$ adds at least one color from $S$ to
the fingerprint; every two pairs in the sequence are distinct, implying that
its length is bounded by $|S|\cdot |V|$.
We define the stem $w$ to consist of the node components of this shortened sequence.
As $\pi$ is a play on $A$ and $w$ is obtained from a finite prefix of $\pi$ by removing loops,
$w$ is a finite play on $A$ as well. By construction, whenever $w$ visits some node
$v$ with $S$-fingerprint $C$, so does $\pi$.

It remains to construct a suitable loop that starts at the end of $w$. To this end,
let $V_\pi$ and $E_\pi$ denote the set of nodes and moves, respectively,
 that occur infinitely often in $\pi$.
Then $(V_\pi,E_\pi)$ is a strongly connected sub-graph of $(V,E)$, that is, for all
nodes $v,v'\in V_\pi$, there is a finite play of length at most $|V_\pi|$ that starts
at $v$, ends in $v'$ and uses only moves from $E_\pi$. It follows that there is, for all $v\in V_\pi$ and all moves
$e=(v_1,v_2)\in E_\pi$, a finite play of length at most $|V_\pi|+1$ starting at $v$ and containing the move $e$: there is a play $\tau$ of length at most $|V_\pi|$ from $v$ to $v_1$; Thus $\tau v_2$ is a play of length at most $|V_\pi|+1$ that starts at $v$ and contains the move $e$.

Let $C^S_\pi\subseteq S$ and $C^W_\pi\subseteq W$ be the sets of colors
that occur infinitely often in $\gamma_S(\pi)$ and $\gamma_W(\pi)$, respectively.
Recall that $\pi$ satisfies $\varphi_S$ and $\varphi_W$, that is,
that $\gamma_S(\pi)\models \varphi_S$ and $\gamma_W(\pi)\models \varphi_W$.
We construct the loop $u$, visiting one-by-one all the colors
contained in $C^S_\pi\cup C^W_\pi$, writing $C^S_\pi\cup C^W_\pi=\{c_1,\ldots,c_{o}\}$
and noting that $o\leq |S|+|W|$. The definition is inductive:
let $v_0$ be the last node in the stem $w$ constructed above.
For $1<i\leq o+1$, let $v_{i-1}$ be the last node in the play $\tau_{i-1}$ constructed in the step for $i-1$.
Then we define $\tau_i$ to be some play of length at most $|V_\pi|+1$
that starts at $v_{i-1}$ and uses some move with color $c_i$.
We have shown above that such a play always exists. Finally, let $\tau_f$ be some
play of length at most $|V_\pi|$ that starts at $v_{o}$ and ends in node that has
a move to the first node of $\tau_1$.
Then the loop $u=\tau_1 \tau_2\ldots \tau_{o}\tau_f$ is a finite play of length at most
$(|V_\pi|+1)\cdot (|S|+|W|+1)$ that visits exactly the colors contained
in $C^S_\pi\cup C^W_\pi$.
Furthermore, the edge $(\tau_f,\tau_1)$ satisfies
$\{\gamma_S(\tau_f,\tau_1),\gamma_W(\tau_f,\tau_1) \subseteq C^S)^\pi
\cup C^W_\pi$.
As $\pi$ satisfies both $\varphi_S$ and $\varphi_W$, so does the infinite play $u^\omega$, showing
that $(w,u)$ is a certificate for $v$ in $G$.
By construction, whenever $wu^\omega$ visits some node
$v$ with $S$-fingerprint $C$, so does $\pi$.
\end{proof}

\begin{example}\label{ex:cert}
To see how the proof of Lemma~\ref{lem:certificateExistence} extracts certificates from witnesses, consider the game depicted below,
using colors $S=W=\{a,b,c,d\}$ and two objectives over the same colors and
color assignments
\begin{align*}
\varphi_S&=(\mathsf{Inf}~a \to \mathsf{Inf}~c)\wedge \mathsf{Fin}~b
& \varphi_W&=\mathsf{Inf}~d
\end{align*}
\begin{center}
\tikzset{every state/.style={minimum size=15pt}, every node/.style={minimum size=15pt}}
    \begin{tikzpicture}[
		auto,
    node distance=1.8cm,
    semithick
    ]
     \node[state] (0) {$x$};
     \node[state](1) [right of=0] {$y$};
     \node[state](2) [right of=1] {$z$};

     \path[->] (0) edge node[] {$\{a\}$} (1);
     \path[->] (1) edge [bend left] node[] {$\{a\}$} (2);
     \path[->] (2) edge [bend left] node[] {$\{a,d\}$} (1);
     \path[->] (1) edge [loop above] node[] {$\{b\}$} (1);
     \path[->] (2) edge [loop above] node[] {$\{a,c\}$} (2);
  \end{tikzpicture}

\end{center}
Then $\pi=x y y z (y z z)^\omega$ is a play that results in the sequence
$\{a\}\{b\}\{a\}(\{a,d\}\{a\}\{a,c\})^\omega$ of sets of colors.
The sequence satisfies
$\varphi_S$ and $\varphi_W$
since the colors $a$, $c$ and $d$ occur infinitely often in it, but color $b$ does not.
Making the $S$-fingerprints in $\pi$ explicit, we obtain the sequence
\begin{align*}
\hspace{-15pt}\rho=(x,\emptyset)(y,\{a\})(y,\{a,b\})(z,\{a,b\})(y,\{a,b\})(z,\{a,b\})
((z,\{a,b,c\})(y,\{a,b,c\})(z,\{a,b,c\}))^\omega
\end{align*}
We note that after taking the first transition from $x$ to $y$, all nodes that $\pi$ visits
are visited infinitely often by $\pi$. Furthermore, $\pi$ is cyclic from the third
visit of $y$ on. The $S$-fingerprint however reaches its full extent $\{a,b,c\}$
only upon the third visit of $z$, that is, at the end of the first iteration of the loop
$y z z$.

We obtain a stem from the fingerprint-increasing prefix
\begin{align*}
\rho_w=(x,\emptyset)(y,\{a\})(y,\{a,b\})(z,\{a,b\})(y,\{a,b\})(z,\{a,b\})(z,\{a,b,c\})
\end{align*}
of $\rho$ by removing loops that do not change the fingerprint. This leads to the sequence
\begin{align*}
\rho_{w_1}=(x,\emptyset)(y,\{a\})(y,\{a,b\})(z,\{a,b\})(z,\{a,b,c\}).
\end{align*}
The node components of this sequence yield the stem $w=x y y z z$.

To obtain a suitable loop, we consider the
subgraph $(V_\pi,E_\pi)$ that is given by the moves that are taken infinitely often in $\pi$;
this graph consists of the edges $(y,z)$, $(z,z)$ and $(z,y)$; the colors
$a$, $c$ and $d$ are visited infinitely often by $\pi$, so we construct the loop $u=\tau_a\tau_c\tau_d\tau_f$, using only edges from $E_\pi$,
where $\tau_a=z y$, that is, $\tau_a$ is a play that starts at the end node $z$ of the stem $w$ and
takes some edge with color $a$ (in this case we pick the edge $(z,y)$); furthermore, $\tau_c=y z z$ is a play that starts at the end node $y$ of $\tau_a$ and uses the edge $(z,z)$, seeing color $c$.
Then $\tau_d=z y$ is a play that starts at the end node of $\tau_c$ and uses the edge $(z,y)$ with color $d$,
and finally, $\tau_f=y z$ is a play connecting the end node of $\tau_d$ and the start node of $\tau_a$, closing the loop. This results in an overall certificate
\begin{align*}
wu=x y y z z ~y z z y z
\end{align*}
with sequence $\{a\}\{b\}\{a\}\{a,c\}\{a,d\}(\{a\}\{a,c\}\{a,d\}\{a\}\{a,d\})^\omega$ of sets of colors.
Thus the certificate visits the colors $a,c$ and $d$ (but not $b$) infinitely often so that it
satisfies both $\varphi_S$ and $\varphi_W$. Furthermore, for every node
$v$ that is visited with $S$-fingerprint $C$ in the certificate, the
node $v$ is visited with fingerprint $C$ in $\pi$ as well.
In particular the choice of the starting point of the loop ensures that
the $S$-fingerprints for all nodes in the loop have reached the full extent $\{a,b,c\}$.
\end{example}

\begin{lemma}\label{lem:witCert}
Let $G$ be an obliging game and let $v$ be a node in $G$.
If player $\exists$ wins
from $v$ in $W(G)$,
then player $\exists$ wins from $v$ in $C(G)$.
\end{lemma}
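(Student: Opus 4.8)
The plan is to simulate a winning strategy $\sigma'$ of player~$\exists$ in $W(G)$ by a strategy of player~$\exists$ in $C(G)$, running a ``shadow'' play of $W(G)$ alongside the play of $C(G)$. The central device is Lemma~\ref{lem:certificateExistence}: whenever $\sigma'$ commits to an infinite witness I extract from it a finite valid certificate with matching $S$-fingerprints and let $\exists$ play that certificate in $C(G)$.

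First I would set up the simulation invariant. Assume $\sigma'$ is winning for $\exists$ from $v$ in $W(G)$. I maintain, for every finite play of $C(G)$ that is compatible with the strategy under construction and ends in a game node $v'\in V=N_\exists$, a corresponding finite play $h'$ of $W(G)$ that is compatible with $\sigma'$ and ends in the same node $v'\in V'_\exists$. At $v'$ the strategy $\sigma'$ prescribes a witness $\pi=\sigma'(h')\in\mathsf{witness}(v')$, and the key observation is that $\pi$ must satisfy $\varphi_S\wedge\varphi_W$: letting player~$\forall$ verify $\pi$ forever in $W(G)$ produces a play that eventually stays in $V'_\forall$, whose $\seq_A$-projection is exactly $\pi$; since $\sigma'$ is winning, the first clause of $\alpha'$ forces $\pi\in\alpha_S\cap\alpha_W$. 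Hence Lemma~\ref{lem:certificateExistence} applies and yields a valid certificate $c=v_0v_1\ldots v_m\in\mathsf{Cert}(v')$, which $\exists$ plays; in particular $\mathsf{Cert}(v')\neq\emptyset$, so the strategy never reaches a node without a move.

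Next I would describe how the simulation tracks player~$\forall$'s response to $c$. If $\forall$ accepts $c$ by taking the trivial self-loop forever, the play satisfies $\varphi'$ through its $\mathsf{Inf}(c_\top)$ disjunct and no shadow update is needed. If instead $\forall$ exits $c$ at a position $i$ with $v_i\in V_\forall$ to some $v''\in E(v_i)$, then I invoke the fingerprint clause of Lemma~\ref{lem:certificateExistence} to pick a position $j$ in $\pi$ with $\pi_j=v_i$ and with the $S$-fingerprints of $v_0\ldots v_i$ and of $\pi_0\ldots\pi_j$ coinciding, and I extend the shadow by having $\forall$ verify $\pi$ up to $\pi_j$ and then exit to $v''$; this move is legal since $\pi_j=v_i\in V_\forall$ and $v''\in E(v_i)$. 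The extended shadow is again $\sigma'$-compatible and ends in $v''$, re-establishing the invariant, and the crucial side effect is that the $S$-colours of the $C(G)$ exit edge, namely $\gamma_S(v_0\ldots v_i v'')$, equal the union $\gamma_S(\pi_0\ldots\pi_j v'')$ of the $S$-colours along the corresponding shadow segment.

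Finally I would check that every play $\rho$ of $C(G)$ compatible with the constructed strategy satisfies $\varphi'$. If $c_\top$ occurs infinitely often in $\rho$ we are done. Otherwise $\forall$ exits certificates infinitely often, so the shadow play visits $V'_\exists$ infinitely often and, being compatible with the winning strategy $\sigma'$, satisfies the second clause of $\alpha'$, i.e.\ $\seq_A(\text{shadow})\in\alpha_S$. The main obstacle, and the heart of the argument, is transferring this to $\rho$: decomposing both plays into the rounds delimited by consecutive exits, the $S$-colours of round $t$ form a single set $F_t$ in $\rho$ (concentrated on the exit edge), while the union of the $S$-colours along the $t$-th shadow segment is the same $F_t$ (by the fingerprint equality of the previous step). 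Consequently a colour is seen infinitely often in $\rho$ iff it belongs to $F_t$ for infinitely many $t$ iff it is seen infinitely often in $\seq_A(\text{shadow})$; since $\varphi_S$ depends only on the set of colours seen infinitely often, $\rho$ satisfies $\varphi_S$ and hence $\varphi'$. I expect the delicate bookkeeping to be the round decomposition (accommodating finitely many interspersed self-loops) and the precise alignment of exit points via the fingerprint clause, rather than any conceptually new ingredient.
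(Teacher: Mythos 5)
Your proposal is correct and follows essentially the same route as the paper's proof: an inductive simulation maintaining a $\sigma'$-compatible shadow play of $W(G)$, extraction of a valid certificate from each prescribed witness via Lemma~\ref{lem:certificateExistence}, and transfer of $\varphi_S$ through the coinciding $S$-fingerprints at exit points. If anything, you spell out two steps the paper leaves implicit (why each prescribed witness satisfies $\alpha_S\cap\alpha_W$, and the round-by-round fingerprint transfer at the end), so no changes are needed.
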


\begin{proof}
Let $\sigma$ be a winning strategy for player $\exists$ in $W(G)$.
We inductively construct a strategy $\tau$ for the game $C(G)$; as
invariant of the inductive
construction, we associate with every finite play
$\pi$ of $C(G)$ that adheres to the strategy $\tau$ constructed so far,
a finite play $\rho_\pi$ of $W(G)$ that adheres to $\sigma$.
In the base case of the empty play $\epsilon$ that consists just of
$v$, we put
$\rho_\epsilon=\epsilon$.

For the inductive step, let $\pi w_\exists$ be a finite play ending in
$w_\exists$ that
adheres to the part of $\tau$ that has been constructed so far, and
let $\rho_\pi w$ be the associated play that adheres to $\sigma$.
As $\sigma$ wins $w$, there is a witness $\epsilon$ such that
$\sigma(\rho_\pi)=\epsilon$.
From $\sigma$ being winning, we conclude that $\epsilon$ satisfies
$\gamma_W$ and $\gamma_S$.
By Lemma~\ref{lem:certificateExistence}, there is some certificate
$c=w_0 w_1\ldots
\in \mathsf{Cert}(w)$ such that for all positions $i$ in $c$, there is
a position $j$ in $\epsilon$ such that $w_i=\xi_j$ and the
$S$-fingerprints
of
$w_0\ldots w_i$ and $\xi_0\ldots\xi_j$ coincide.
We put $\tau(\pi w_\exists)=c$. For every position $i$ in $c$ such that
$w_i\in V_\forall$ and every $w'\in E(w_i)$, we have a move $(c,(c,i,w'))$ in $C(G)$.
Then there is some $j$ such that $\xi_j=w_i$ and the $S$-fingerprints
of $w_0\ldots w_i$ and $\xi_0\xi_1\ldots\xi_j$ coincide, where
$w_0=\xi_0=w$.
We extend the play $\rho_\pi w$ that is associated
with $\pi w_\exists$ to the play $\rho w\xi_1 \ldots \xi_jw'$ and
associate it to
the play $\pi w_\exists c w_i w'$. By construction, these extended plays
are compatible
with $\sigma$ and the part of $\tau$ that has been constructed so far.
For moves from $E$
taken in $C(G)$, the associated play is updated accordingly.

It remains to show that $\tau$ is a winning strategy for $v$ in $C(G)$,
that is, that
every play of $C(G)$ that starts at $v$ and adheres to $\tau$
satisfies $\varphi_S$. To this end, let $\pi$ be a play
that starts at $v$ and adheres to $\tau$ and let $\rho_\pi$ be the play
in $W(G)$ that is associated with $\tau$
according to the inductive invariant of the construction of $\tau$;
recall that $\rho_\pi$ adheres to $\sigma$. We note that by construction, player $\exists$ always is able to pick a valid certificate as long as they follow strategy $\tau$. As player $\exists$
graciously wins $v$ using the strategy $\sigma$, $\rho_\pi$ satisfies
$\varphi_S$. By construction, the $S$-fingerprints in $\pi$ and
$\rho_\pi$ coincide
so that $\tau$ satisfies $\varphi_S$ as well.
\end{proof}

\noindent\textbf{Proof of Lemma~\ref{lem:LAR} (Correctness of lazy LAR reduction):}

\begin{proof}
By slight abuse of notation, we let $\Omega(\tau)$ denote
the maximal $\Omega$-priority that is visited in $\tau$, having
$\Omega(\tau)=2(p(\tau))$ if $\pi_0[p(\tau)]\models\varphi_C$
and
$\Omega(\tau)=2(p(\tau))+1$ if $\pi_0[p(\tau)]\not\models\varphi_C$.
\begin{itemize}

	\item[$\Rightarrow$] Let $\sigma=(\Pi(C),\mathsf{update}_\sigma,\mathsf{move}_\sigma)$ be a strategy
with memory $\Pi(C)$ for player $\exists$ in $G$
with which they win every node from their winning region.
It has been shown in a previous LAR reduction for Emerson-Lei games~\cite{HunterD05} that winning strategies
with this amount of memory always exist.
We define a strategy $\rho=(\Pi(C),\mathsf{update}_\rho,\mathsf{move}_\rho)$
with memory $\Pi(C)$ for player $\exists$ in $P(G)$
by putting $\mathsf{update}_\rho(\pi,((v,\pi'),(w,\pi'')))=\mathsf{update}_\sigma(\pi,(v,w))$
and $\mathsf{move}_\rho((v,\pi'),\pi)=(w,\pi@\gamma_C(v,w))$ where
$w=\mathsf{move}_\rho(v,\pi)$. Thus $\rho$ updates the memory and picks moves just as
$\sigma$ does, but also updates the permutation component in $P(G)$ according to the taken moves; hence $\rho$ is a valid strategy.

We show that $\rho$ wins a node $(v,\pi)$ in $P(G)$ whenever $v$ is in the winning region
of player $\exists$ in $G$.
To this end, let $\tau=(v_0,\pi_0)(v_1,\pi_1)\ldots$ be a play of $P(G)$ that starts at
$(v_0,\pi_0)=(v,\pi)$ and is compatible with $\rho$. By construction, $\pi=v_0 v_1\ldots$ is a
play that is compatible with $\sigma$. Since $\sigma$ is a winning strategy for player
$\exists$, we have $\gamma_C(\pi)\models \varphi_C$. There is a number $i$
such that all colors that appear in $\pi$ from position $i$ on occur infinitely often.
Let $p$ be the number of colors that appear infinitely often in $\pi$.
It follows by definition of $\pi@D$ for $D\subseteq C$ (which moves the single
right-most element of $\pi$ that is contained in $D$ to the very front of $\pi$),
that there is a position $j\geq i$ such that the left-most $p$ elements of $\pi_j$
are exactly the colors occuring infinitely often in $\pi$ (and all colors to
the right of $\pi_j(p)$ are never visited from position $j$ on). It follows
that from position $j$ on, $\tau$ never visits a priority larger than $2p$.
To see that $\tau$ infinitely often visits priority $2p$
we note that $\pi_j'[p]\models\varphi_C$
for every $j'>j$, so it suffices to show that $p$ infinitely
often is the rightmost position in the permutation component of $\tau$
that is visited. This is the case since, from position $j$ on,
the $p$-th element in the permutation component
of $\tau$ cycles fairly through all colors that are visited infinitely often by $\pi$.

	\item[$\Leftarrow$]

Let $\rho$ be a positional strategy for player $\exists$ in $P(G)$
with which they win every node from their winning region.
We define a strategy $\sigma=(\Pi(C),\mathsf{update}_\sigma,\mathsf{move}_\sigma)$
with memory $\Pi(C)$ for player $\exists$ in $G$
by putting $\mathsf{update}_\sigma(\pi,(v,w))=\pi@\gamma_C(v,w)$
and $\mathsf{move}_\sigma(v,\pi)=w$ where
$w$ is such that $\rho(v,\pi)=(w,\pi@\gamma_C(v,w))$. Thus $\sigma$ updates the memory and
picks moves just as plays that follow $\rho$ do.

We show that $\sigma$ wins a node $v$ in $G$ whenever $(v,\pi)$ is in the winning region
of player $\exists$ in $P(G)$.
To this end, let $\pi=v_0 v_1\ldots$ be a play of $G$ that starts at
$v_0=v$ and is compatible with $\sigma$. By construction, $\pi$
induces a play $\tau=(v_0,\pi_0) (v_1,\pi_1)\ldots$ with $(v_0,\pi_0)=(v,\pi)$
and $\pi_{i+1}=\pi_{i}@\gamma_C(v_i,v_{i+1})$ for $i\geq 0$
that is compatible with $\rho$. Since $\rho$ is a winning strategy for player
$\exists$, the maximal priority in it is even (say $2p$).
Again, $p$ is the position such that the left-most $p$ elements in the permutation
component of $\rho$ from some point on contain exactly the colors that are
visited infinitely often by $\pi$. It follows that $\gamma_C(\pi)\models\varphi_C$.

\end{itemize}

\end{proof}

\noindent\textbf{Proof of Lemma~\ref{lem:attrAut} (Correctness of efficient DAG attractor computation):}

\begin{proof}
For one direction, let player $\exists$ be able to attract to $\overline{V}$ from
$(v,\pi)$ within $\mathsf{Cert}\times\Pi(S)$. Then there is
a valid certificate $c=v_0\ldots v_m\in \mathsf{Cert}(v)$ for $v$ (with $v_0=v$) such that
for all positions $i$ in $c$ such that $v_i\in V_\forall$
and for all $w\in E(v_i)$, we have
$w\in V_{2p+1}$ or $w\in V_{2p}$, that is, $c$ is safe.
Here,
$p$ is the rightmost position in $\pi$ such that $\pi(p)\in\gamma_S(v_0\ldots v_iw)$. 
Let $u=v_0\ldots v_q$ be the stem of $c$ and $w=v_{q+1}\ldots v_m$ the loop.
Let $i$ be the priority corresponding to the $S$-fingerprint of the stem $u$ with
respect to $\pi$.
Define a run $\tau$ of the automaton $A_i$ from the procedure that computes the set $M$ by
$\tau(0)=(v_{q+1},i)$ and $\tau(j)=(v_{q+1+j},i)$ for $j>0$; here we use $v_{q+1+j}$
to refer to the $j$-th position in the infinite run $w^\omega$. 
As $c$ is a safe certificate, $\tau$ only visits such game nodes $v_r\in V_\forall$ such
that for all successors $w'\in E(v_r)$ of $v_r$, we have $(w',i)\in V_i$.
Thus $\tau$ indeed is an admissible run of $A_i$.
As $c$ is a valid certificate, we have $\gamma_S(u)\models\varphi_S$
and $\gamma_W(u)\models\varphi_W$ so that $\mathsf{Inf}(\gamma_C(\tau))\models \varphi_C$, showing that
$\tau$ is accepting. Thus $(v_{q+1},i)$ is contained in $N_i$.
Now we argue that $(v,0)$ is contained in the graph $M$ after the construction of $M$ terminates.
This indeed is the case since the stem $u$ of the certificate $c$ is safe by assumption; thus the stem corresponds to a path from $(v,0)$ to $(v_{q+1},i)$ that visits
only such nodes $(v_j,p)$ with $v_j\in V_\forall$ where we again have that for all $w'\in E(v_j)$,
$(w',p')\in V_{2p'}$ or $(w',p')\in V_{2p'+1}$, where $p'=\max(p,\Omega_\pi(v_j,w')$.

For the converse direction, let $(v,0)$ be contained in the graph $M$ after the
computation terminates.
Then there is $i$ such that the automaton $A_i$ is non-empty
and there is a path from $(v,0)$ to some $(v_{q+1},i)$ in $M$.
Let $\tau=(v_0,p_0)(v_1,p_1)\ldots$ be a path in $M$ (starting at $(v,0)$ so that $v_0=v$) that witnesses these facts by combining the path from
$(v,0)$ to $(v_{q+1},i)$ and then looping within $N_i$.
We extract a valid certificate from $\tau$ as follows. Define the stem
to be $w=v_0v1\ldots v_{q+1}$. Extract the loop $u$ as in
the proof of Lemma~\ref{lem:certificateExistence}, that is, by
walking through the set of game nodes that occur infinitely often in $\tau$
and visiting all colors from $\gamma_S(u)\cup\gamma_W(u)$.
As $\tau$ is accepting, $\gamma_S(u)\cup\gamma_W(u)\models\varphi_S\wedge\varphi_W$,
showing that $c=wu$ is a valid certificate.
It remains to show that
for all positions $i$ in $c$ such that $v_i\in V_\forall$
and for all $w\in E(v_i)$ such that $w\neq v_{i+1}$, we have
$w\in V_{2p+1}$ or $w\in V_{2p}$,
where $p$ again is the rightmost position in $\pi$ such that
$\pi(p)\in\gamma_S(v_0\ldots v_iw)$.
This is the case since all unsafe vertices have been removed from $M$ so
that $\tau$ visits only safe vertices (that have the required property).
\end{proof}

%

\end{document}